\documentclass[submission,copyright,creativecommons]{eptcs}
\providecommand{\event}{AFL 2026} 

\usepackage{amsmath,amssymb}

\usepackage{amsthm}

\ifpdf
  \usepackage{underscore}         
  \usepackage[T1]{fontenc}        
\else
  \usepackage{breakurl}           
\fi
\newtheorem{definition}{Definition}
\newtheorem{theorem}{Theorem}

\newtheorem{corollary}{Corollary}

\newtheorem{proposition}{Proposition}

\newtheorem{example}{Example}

\newcommand {\5} {$5'\to 3'$ }
\newcommand {\3} {\mathcal{L}_{53SAS}}
\newcommand {\hash} {\#}

\title{Reverse Post Correspondence Problem and Undecidability of $5'\to 3'$ String Assembly Systems}
 \author{Benedek Nagy
 \institute{
 	Department of Mathematics, Faculty of Arts and Sciences, \\
      Eastern Mediterranean University, Famagusta, North Cyprus, via Mersin-10, T\"urkiye\\ and \\
      Department of Computer Science, Institute of Mathematics and Informatics\\ Eszterh\'azy K\'aroly Catholic University, Eger, Hungary}
 \email{nbenedek.inf@gmail.com}\\
 }

 \def\titlerunning{Reverse PCP and Undecidability of $5'\to 3'$ SAS}%
 \def\authorrunning{Benedek Nagy}

\begin{document}
\maketitle

\begin{abstract} The
Post Correspondence Problem is as follows: 
 having a set of dominoes, %
is there any (maybe repeating) sequence of them such that the words formed by the upper parts and the lower parts by the sequence of dominoes are identical. It is one of the most known problems that is algorithmically undecidable. In this paper,
the reverse Post Correspondence Problem is defined, that is, where the two assembled words of the dominos are reversals of each other. Undecidability about this new, modified problem is proven.

Further, based on this result, it is also proven that the emptiness %
 problem for \5 String Assembly Systems is undecidable too. \5 String Assembly Systems  belong to String Assembly Systems type formal language generating models. The \5 denotes that, in these variants, the derivations of the generated words start from the two extremes. The notation and the new model are bio-motivated as any double stranded DNA has two opposite oriented \5 strands. \\
\\ \textbf{Keywords: }{Reverse PCP, Post Correspondence Problem, Undecidability, String Assembly Systems, 
 DNA computing
 }
\end{abstract}


\section{Introduction} %

Formal languages and automata theory are basic fields of computability and computer science. Turing, in his seminal paper \cite{Turing}, on the one hand, defined a universal model of algorithm, later named after him, the Turing machine, and on the other hand, he proved that there are problems that are algorithmically not solvable, i.e., not computable, e.g., the halting problem of Turing machines.
 Later, among others, Post published an easy to understand problem, the Post correspondence problem (PCP) that is, similarly to the halting problem of Turing machines, undecidable, i.e., not computable \cite{Post}. In this problem a finite set of dominoes is given with words written both in their upper and lower parts. The question is whether one can make a sequence of dominoes (maybe using any domino more than once) such that the strings that are composed 
 in the upper and lower parts are identical.
 The PCP has many variants, including recent ones, some of them can algorithmically be solved \cite{Halava-IJFCS}, while some of them, similarly to the original variant, are algorithmically not solvable in general \cite{RAIRO23}. 
Among many variants, we recall only the circular PCP \cite{circPCP}, where, given the set of dominoes, the question is whether
 the string composed in the lower part %
  can be a cyclic permutation of the string composed in the upper part. 
 This problem is also undecidable.
In our paper, a new variant of PCP, the Reverse PCP, will play a central role. In a nutshell, Reverse PCP asks if given an instance of the PCP, is it possible to
arrange the dominoes so that one word is the reverse of the other (the formal definition comes later in the paper). 

 DNA computing emerged some decades ago including various branches of new computing paradigms \cite{DNAbook}.
Various models are based on the double-stranded and complementary structure of DNA molecules, including sticker systems, Watson-Crick automata \cite{Freund} and string assembly systems (SAS). It was proven in \cite{Kuske} that the Watson-Crick complementarity relation can be excluded from sticker systems and Watson-Crick automata without affecting the defined languages, therefore SAS is already defined without involving this relation into it  similarly as \5 Watson-Crick automata are used, e.g., in \cite{CiE2009,2detLIN,AFL2017}. 
SAS are defined in \cite{SAS}, and some of their variants are further developed in \cite{SAS2,SAS3}. Their decidability is addressed, e.g., in \cite{SASdec}.
In this paper, we present a variant that is motivated by the structure of the DNA strands, namely by the fact that they have opposite \5 directions. Consequently, when an enzyme is building and constructing a strand, it uses the same biochemical direction for both strands; which directions, in fact, are opposite physical (or mathematical) directions. 
In this paper, we investigate %
 this new variant. Actually, here we focus on decision problems;  in another paper \cite{53SAS-hier}, relation to other classes of languages and hierarchy results are discussed.

Up to now, due to our best knowledge, there was no variant of PCP where the two strings are reversal of each other. The proof of undecidability of these problems and some decision problems of \5 SAS are closely related as we will use very similar constructive proofs for them.
  
In this paper, after we recall some formal concepts and we give our notations in Section \ref{sec:pre}, we define the reverse PCP  and prove its undecidability (Section \ref{sec:rPCP}). In Section \ref{sec:53SAS}, we use a similar proof %
 to show that the emptiness  problem for \5 String Assembly Systems is also undecidable.
We also show that %
in the unary case 
these %
problems
are decidable.

\section{Preliminaries}\label{sec:pre}

It is expected that the reader is familiar with the basics of computability including formal languages and
automata. Otherwise, for unexplained basic concepts the reader is referred, e.g.,
to \cite{HopUl,%
Sipser,DNAbook}. Here, we recall some concepts that play central roles in this paper and also
fix our notations.

Let $\mathbb N$ denote the set of positive integers and let $\lambda$ denote the empty word.
For an alphabet $\Sigma$, $\Sigma^*$ contains all words (also called strings) that can be obtained by its symbols
including the empty word and $\Sigma^+$ denotes the set of all nonempty words that can be obtained by $\Sigma$.
In this paper, we generally work with $\lambda$-free languages, and therefore, two languages are considered to be equal if they differ at most in the empty word $\lambda$.
The length of a word
$w\in \Sigma^*$  is denoted by $|w|$. The reversal of a word $w=a_1\dots a_n$ is $w^R=a_n\dots a_1$ with $a_i\in \Sigma$.

A Turing machine (abbreviated as TM) is a finite state computational device with a possibly infinite tape and a reading-writing head. The head is always over a tape cell. In any instant of the time, the tape has only a finite segment with valid content, all the remaining infinite parts both to the left and to right are containing special space  symbols (denoted by $\hash$).
Formally, a TM is an ordered tuple $$(Q,V,\Sigma,q_0,\hash,\delta,Q_f),$$ where
\begin{itemize}
\item $Q$ is a finite set of states;
\item $V$ is the input alphabet;
\item $\Sigma$ is the tape alphabet, $V \subset \Sigma$, w.l.o.g. we assume that $\Sigma \cap Q =\emptyset$;
\item $q_0 \in Q$ is the initial state;
\item $\hash \in \Sigma\setminus V$ is the space symbol (denoting the content of tape cells that are not in use);
\item $\delta: Q\times \Sigma \to Q \times \Sigma \times \{left,stay,right\}$ is the transition function (the program of the TM); and finally
\item $Q_f$ is the set of final (also called halting) states; a subset $Q_a\subset Q_f$ %
 is used as accepting states if the TM is used to accept a language (then $Q_a$ may appear in the tuple directly instead of $Q_f$).
\end{itemize}

 The computation is usually defined by configurations. The initial configuration on an input $w\in V^+$ is represented as $q_0 w$.
A general configuration is a string $uqv\in \Sigma^* \cdot Q \cdot \Sigma^+$ (with $q\in Q$), where it is assumed that the head is over the first symbol of $v$. It is assumed that $u$ never starts with $\hash$ (i.e., any prefix of the form $\hash^+$ is not written). On the other hand, any suffix of the form $\hash \hash^+$ is not written for $v$, but it is allowed to have $v=\hash$, in case the head is %
on the right end of the valid tape content. %

The computation steps are defined by the transition function as follows:
\begin{itemize}
\item $uqav \Rightarrow uq'bv $, if $(q',b,stay) = \delta(q,a)$ with $a,b\in \Sigma$, $q\in Q\setminus Q_f$, $q'\in Q$, $u,v\in\Sigma^*$;
\item $uqav \Rightarrow ubq'v $, if $(q',b,right) = \delta(q,a)$ with $a,b\in \Sigma$, $q\in Q\setminus Q_f$, $q'\in Q$, $u,v\in\Sigma^*$, $v\ne \lambda$;
\item $uqa \Rightarrow ubq'\hash $, if $(q',b,right) = \delta(q,a)$ with $a,b\in \Sigma$, $q\in Q\setminus Q_f$, $q'\in Q$, $u \in\Sigma^*$;
\item $ucqav \Rightarrow uq'cbv $, if $(q',b,left) = \delta(q,a)$ with $a,b,c\in \Sigma$, $bv\not\in \hash^*$, $q\in Q\setminus Q_f$, $q'\in Q$ $u,v\in\Sigma^*$;
\item $ucqa \Rightarrow uq'c $, if $(q',\hash,left) = \delta(q,a)$ with $a,c\in \Sigma$, $q\in Q\setminus Q_f$, $q'\in Q$, $u\in\Sigma^*$;
\item $qav \Rightarrow q' \hash bv $, if $(q',b,left) = \delta(q,a)$ with $a,b\in \Sigma$, $bv\not\in \hash^*$, $q\in Q\setminus Q_f$, $q'\in Q$ $v\in\Sigma^*$;
\item $qa \Rightarrow q'\hash $, if $(q',\hash ,left) = \delta(q,a)$ with $a\in \Sigma$, $q\in Q\setminus Q_f$, $q'\in Q$.
\end{itemize}
The computation relation $\Rightarrow^*$  is defined as the reflexive and transitive closure of the computation step.
Without loss of generality, it is assumed that the tape contains always at most one consecutive segment containing a word of $(\Sigma\setminus \{\#\})^+$.
We say that the computation of the TM halts on an input $w$ if starting from its initial configuration $q_0 w \Rightarrow^* u p v$ with $p\in Q_f$. Without loss of generality, in this paper, we assume that in halting configurations the head is at the end of the tape content, i.e., $v=\#$.

The halting problem of Turing machines is the first algorithmically undecidable problem: as it has been proven by Turing in his seminal paper \cite{Turing}, it is generally, algorithmically undecidable whether a TM is halting on a given input, more precisely:

\begin{theorem}{\bf (Turing)}\label{thm:T} Let %
a Turing machine, and an input word for this Turing machine be given.
There is no algorithm that can decide for arbitrary pairs of Turing machine and input word if the given Turing machine halts on this %
 input.
Further, there is no algorithm that is able to decide if a given (as input) Turing machine  has any word to accept, i.e., whether the language accepted by the Turing machine is non-empty.
\end{theorem}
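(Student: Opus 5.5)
The plan is the classical diagonalization argument, followed by a reduction from the halting problem to the emptiness problem.

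\emph{First part: undecidability of halting.} Suppose, toward a contradiction, that there is an algorithm, realized by a Turing machine $H$, which on an encoding $\langle M\rangle\$w$ of a TM $M$ and an input $w$ halts in an accepting state if $M$ halts on $w$, and halts in a rejecting state otherwise. Fix an effective encoding of TMs (as tuples $(Q,V,\Sigma,q_0,\hash,\delta,Q_f)$) by words over a fixed alphabet. Build a machine $D$ that proceeds in two phases. On input $\langle M\rangle$ it first copies its input to form $\langle M\rangle\$\langle M\rangle$ and simulates $H$ on it. If $H$ accepts, $D$ enters a non-halting loop, for instance by moving right forever. If $H$ rejects, $D$ halts. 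Running $D$ on $\langle D\rangle$ then gives the contradiction: $D$ halts on $\langle D\rangle$ if and only if $H$ says $D$ does not halt on $\langle D\rangle$. The only technical care is that copying, simulating and looping are realizable with the paper's conventions: a single contiguous tape segment, halting configurations with $v=\hash$, and a transition function that is total on non-final states. This is routine bookkeeping, for example a final cleanup phase that moves the head to the right end before entering a state of $Q_f$.

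\emph{Second part: undecidability of emptiness.} Reduce the halting problem to non-emptiness. Given a pair $(M,w)$, effectively construct a TM $M_w$ that works as follows:
\begin{itemize}
\item it ignores its own input, erasing the tape;
\item it writes $w$ and simulates $M$ on $w$;
\item if the simulation reaches a state of $Q_f$, it moves to the end of the tape and enters an accepting state in $Q_a$.
\end{itemize}
Then the language of $M_w$ is $V^+$ if $M$ halts on $w$, and $\emptyset$ otherwise. The map $(M,w)\mapsto M_w$ is computable, so a decider for emptiness would decide halting, contradicting the first part.

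\emph{Main obstacle.} The main obstacle is not conceptual but formal: one must ensure that the constructions, namely the self-application in $D$ and the erase-and-write-$w$ prefix in $M_w$, stay within the paper's particular definition of TM configurations and steps. That definition has no leading $\hash$, at most one nonblank segment, and the special cases of left moves at the tape boundary. I would handle this by noting that every construction uses only standard subroutines whose behavior is unaffected by these normalization conventions. Alternatively, since the statement is attributed to Turing, one can simply cite \cite{Turing} and observe that the variant of the model is polynomially equivalent to the standard one.
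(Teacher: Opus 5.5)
Your proof is correct, but note that the paper itself gives no proof of this statement. It is quoted as a classical result, attributed to \cite{Turing}, with the reader referred to standard textbooks. What you wrote is the standard textbook argument the citation implicitly stands for: diagonalization for halting, then the computable map $(M,w)\mapsto M_w$ as a many-one reduction from halting to non-emptiness. Strictly speaking, Turing's 1936 paper proves the undecidability of closely related problems, namely circle-freeness and whether a machine ever prints a given symbol. The halting and emptiness formulations are later textbook versions, so a self-contained argument like yours is arguably more faithful to the exact statement than the bare citation.

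What your version buys over the citation is that it is checked against the paper's particular machine model. This matters because the later odPCP and SAS constructions rely on its conventions, in particular halting only upon entering $Q_f$ with $v=\hash$. Two bookkeeping points should be made explicit:
\begin{itemize}
\item The step relation is defined only for $q\in Q\setminus Q_f$. So the copy of $H$ inside $D$ and the copy of $M$ inside $M_w$ must have their final states turned into ordinary states that carry the follow-up transitions (loop or halt in $D$; move to the end and accept in $M_w$). All auxiliary phases, such as copying, erasing and writing $w$, must avoid $Q_f$.
\item The infinite loop in $D$ is most simply a stay-move $\delta(q,a)=(q,a,stay)$. Moving right forever can collide with the no-leading-$\hash$ normalization when the left part of the configuration is empty.
\end{itemize}
Finally, for the fallback of citing Turing, you only need mutual simulation preserving halting between the paper's model and the standard one, not polynomial equivalence.
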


A decade later, Post published another algorithmically undecidable problem, which is also one of the most known such problems \cite{Post}.

In a nutshell, the Post Correspondence Problem (PCP) is as follows:

Let a finite set $D=\{d_1,\dots,d_n\}$ of dominoes, i.e., string pairs be given. The problem is to decide whether there exists a (maybe repeating) sequence of dominoes such that the strings built in the upper and lower parts coincide. Formally,
let the dominoes, the
pairs of words $ D =\{ (\alpha_1, \beta_1), (\alpha_2, \beta_2),\dots, (\alpha_n, \beta_n) \}$ be given %
where $d_i$ has upper part $\alpha_i$ and lower part $\beta_i$, respectively. %
Is there any nonempty finite
sequence $i_1\dots i_k$ of the dominoes, i.e., of indices %
(in fact,  an indexword over the alphabet $\{1,\dots, n\}$) such that $\alpha_{i_1}\alpha_{i_2} \dots \alpha_{i_k} = \beta_{i_1}\beta_{i_2} \dots \beta_{i_k}$. In the affirmative case, if there is a solution, the given indexword (order of dominoes) defines also the word $w=\alpha_{i_1}\alpha_{i_2} \dots \alpha_{i_k}$  ``generated'' by the system.

\begin{theorem}{\bf (Post)} Let the input be a finite set of dominoes (string pairs).
There is no algorithm that can decide for arbitrary input whether this PCP has a solution.
\end{theorem}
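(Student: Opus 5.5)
The plan is to reduce the halting problem of Turing machines (Theorem~\ref{thm:T}) to the PCP, following the classical route through the modified PCP. Given a TM $M=(Q,V,\Sigma,q_0,\hash,\delta,Q_f)$ and an input $w$, I first construct a set of dominoes in which a distinguished starting domino must be used first. Its upper part is a fresh separator $\$$, and its lower part is $\$ q_0 w \$$. The remaining dominoes are of four kinds:
\begin{itemize}
\item copy dominoes $(a,a)$ for each $a\in\Sigma$, together with $(\$,\$)$;
\item transition dominoes encoding the step rules listed above, for example $(qa,\,bq')$ for a right move and $(cqa,\,q'cb)$ for a left move;
\item boundary dominoes $(\$,\,\hash\$)$ and $(\$,\,\$\hash)$, so that blanks can be introduced at either end of the tape;
\item cleanup dominoes $(ap,p)$ and $(pa,p)$ for $p\in Q_f$, which erase the tape around a halting state, plus a final domino $(p\$\$,\$)$.
\end{itemize}

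The key invariant is this. In any partial match, the lower string is exactly one configuration ahead of the upper string. Consequently, a complete match exists if and only if the lower string contains a computation $q_0w\Rightarrow^*upv$ with $p\in Q_f$. That computation can then be closed off by the cleanup dominoes. The ``if'' direction is a direct construction. For ``only if'', I argue by induction on the number of $\$$ symbols that every solution must spell out successive configurations.

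The second step removes the requirement that the starting domino be used first, via the standard interleaving trick. Each upper part $\alpha=a_1\cdots a_m$ becomes $*a_1*\cdots *a_m$ and each lower part $\beta=b_1\cdots b_m$ becomes $b_1*\cdots b_m*$. The starting domino receives an extra leading $*$ on its upper side, and one adds the end domino $(*\diamond,\diamond)$. With these changes, only the starting domino can begin a match, since it is the only domino whose two parts start with the same symbol. Solutions of the modified instance therefore correspond exactly to solutions of the ordinary PCP instance. It follows that a decision procedure for PCP would decide the halting problem, which contradicts Theorem~\ref{thm:T}.

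I expect the main obstacle to be the ``only if'' direction of the first step. One must show that no spurious match arises, for instance from misusing the boundary blank dominoes or from applying cleanup dominoes before a halting state appears. I would handle this by tracking, at each $\$$ boundary, the unique overhang of the lower string and checking that the next segment is forced to be the successor configuration, up to padding by $\hash$ symbols, which the configuration conventions of the paper allow us to ignore. The determinism of $\delta$, together with the requirement that each configuration contain exactly one state symbol, gives this uniqueness. Cleanup dominoes apply only to states in $Q_f$, from which $\delta$ produces no further moves.
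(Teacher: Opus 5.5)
The paper gives no proof of this statement. It quotes Post's classical theorem and uses it only as background. Your route is the standard textbook proof: reduce the halting problem to the modified PCP by encoding configuration histories, then remove the fixed start with the $*$-interleaving trick. This is a correct approach, and it is useful to see how it sits next to the paper.

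\begin{itemize}
\item Your first step uses essentially the same idea as the paper's odPCP construction in Subsection~\ref{sec:subUndecPCP}: successive TM configurations encoded in the two strands.
\item Your second step forces the starting domino by making it the only one whose two parts begin with the same symbol. This does not transfer directly to the opposite-direction setting, because there the two words begin with parts of \emph{different} dominoes (the first and the last). That is why the paper imposes a regular filter there and lists removing it as open.
\end{itemize}

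So your argument gives a self-contained, filter-free proof for the classical PCP, while the paper's citation gives brevity.

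Two details need repair.

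First, the extra leading $*$ must go on the \emph{lower} part of the starting domino. That gives upper part $*\$$ and lower part $*\$*q_0*\cdots*\$*$. As you wrote it, with the extra $*$ on the upper part, every upper part begins with $*$ and no lower part does. Then no match can even start, and every instance is mapped to a no-instance.

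Second, the domino $(\$,\$\#)$ does not pad the configuration you need. When it closes the round whose upper parts read $c_i\$$, the next round writes its output after that $\#$. So the blank lands at the front of $c_{i+2}$, not $c_{i+1}$. Consequently $c_0$ and $c_1$ can never be left-padded. A machine that moves left off the left end within its first two steps is then not simulated, for instance a stay move followed by a left move from the leftmost cell. Either of these fixes works:
\begin{itemize}
\item Take the starting domino to be $(\$,\$\#q_0w\$\#)$, which pre-pads both $c_0$ and $c_1$.
\item Assume without loss of generality that the machine never moves left of its initial cell, so no left padding is needed.
\end{itemize}
With that, the rest of your argument goes through: exactly one state per configuration, determinism of $\delta$ forcing the successor, transition dominoes only for non-final states, and cleanup only at $Q_f$.
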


Further, let us recall the original concept of SAS from \cite{SAS}.

\begin{definition}\label{def:SAS}
 A \emph{string assembling system} (SAS, shortly) is a quadruple $(\Sigma, A, T, E)$,
where \begin{enumerate}
\item $\Sigma$ is the alphabet, that is, a finite, nonempty set of symbols or letters;
\item $A \subset \Sigma^+ \times \Sigma^+$  is the finite set of axioms of the forms $(uv, u)$ or $(u, uv)$, where
$u\in \Sigma^+$ and $v \in \Sigma^*$;
\item $T \subset \Sigma^+ \times \Sigma^+$ is the finite set of assembly units; and finally,
\item $E \subset \Sigma^+ \times \Sigma^+$  is the finite set of ending assembly units of the forms $(vu, u)$ or
$(u, vu)$, where $u\in \Sigma^+$ and $v \in \Sigma^*$.
\end{enumerate}

Then, formally the units are assembled as follows. 
Let $S = (\Sigma, A, T, E)$ be a SAS.

The derivation relation $\Rightarrow$  is defined on specific subsets of  $\Sigma^+ \times \Sigma^+$  by
\begin{enumerate}
\item $(uv, u) \Rightarrow (uvx, uy)$ if
\begin{enumerate}
\item $uv = ta$, $u = sb$, and $(ax, by) \in T \cup  E$, for $a, b \in \Sigma$, $x, y, s, t \in \Sigma^*$; and
\item $vx = yz$ or $vxz = y$, for $z \in \Sigma^*$.
\end{enumerate}
\item
$ (u, uv) \Rightarrow (uy, uvx)$ if
\begin{enumerate}
\item $uv = ta$, $u = sb$, and $(by, ax) \in T \cup E$, for $a, b \in \Sigma$, $x, y, s, t \in\Sigma^*$; and
\item $vx = yz$ or $vxz = y$, for $z \in \Sigma^*$.
\end{enumerate}
\end{enumerate}

A derivation is said to be successful if it initially starts with an axiom from $A$,
continues with assembling units from $T$, and ends with assembling an ending unit
from $E$. The process necessarily stops when an ending assembly unit is added. 
When an element of $E\cap T$ is used, the process may stop or may continue (nondeterministically). The
sets $A$, $T$, and $E$ are not necessarily disjoint.

The language $L(S)$ generated by $S$ is defined to be the set
$$L(S) = \{ w \in\Sigma^+  | (p, q) \Rightarrow^* (w, w) \text{ is a successful derivation}\},$$
where, as usual, $\Rightarrow^*$ refers to the reflexive and transitive closure of the derivation relation $\Rightarrow$.
\end{definition}

\begin{theorem}{\bf (Kutrib and Wendlandt)} Let the input be a SAS.
There is no algorithm that can decide for arbitrary input SAS if there is any string that can be generated by this SAS.
\end{theorem}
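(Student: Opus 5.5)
We reduce from Post's theorem, the undecidability of PCP, to emptiness of SAS. Given a PCP instance $D=\{(\alpha_1,\beta_1),\dots,(\alpha_n,\beta_n)\}$ over an alphabet $\Gamma$, we build a SAS $S_D=(\Sigma,A,T,E)$ such that $L(S_D)\neq\emptyset$ if and only if $D$ has a solution. The map $D\mapsto S_D$ will be clearly computable, so a decision procedure for SAS emptiness would decide PCP, which is impossible.

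\textbf{Construction.} Since both strings in a SAS derivation must finally coincide as $(w,w)$, the upper and lower strings play the roles of the upper and lower parts of the dominoes. A unit $(\alpha_i,\beta_i)$ by itself does not force the two halves of the derivation to use the \emph{same} index sequence, because upper and lower strings are matched only via overlapping symbols. We therefore encode the index into the letters. Take $\Sigma=\Gamma\times\{1,\dots,n\}$, or equivalently tag each letter with the domino index it comes from, together with a start marker $\$$ and an end marker $\flat$. For a word $\alpha=a_1\cdots a_m$ write $\alpha^{(i)}=(a_1,i)\cdots(a_m,i)$. This alone does not synchronize either, since the upper and lower halves of the same domino land at different positions of the final word.

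The approach of Kutrib and Wendlandt is better. Let the generated word encode a solution as the string $w=\alpha_{i_1}\cdots\alpha_{i_k}$ itself, and handle the synchronization through the SAS gluing rule. By the derivation relation, a new unit $(ax,by)$ is attached only if its first symbols $a,b$ equal the current last symbols of the upper and lower strings. This lets the last symbols carry a ``state''. We make the last letter of each $\alpha_i$ and of each $\beta_i$ a copy tagged with $i$, and we start every unit for domino $j$ with the previous tag. A unit then looks like $(\bar a\,\alpha_j',\ \bar b\,\beta_j')$, where $\alpha_j'$ ends with a letter tagged $j$ and the overlap symbols $\bar a,\bar b$ carry the same previous index. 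The overlap check then forces both strands to append the pieces of the same domino $j$ at the same step. The consistency condition (b), $vx=yz$ or $vxz=y$, guarantees that at every step one strand is a prefix of the other. In the end the two strands are equal words, and they are equal over the tagged alphabet, so the tags must match too. The tags therefore have to appear identically in both strands; to ensure that matching positions carry matching tags, we tag every letter of $\alpha_j$ and of $\beta_j$ with $j$.

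\textbf{Correctness.} For the forward direction, a PCP solution $i_1\cdots i_k$ yields a derivation: start with the axiom $(\$,\$)$, attach one unit per index, and finish with the ending unit $(\flat,\flat)$. Conversely, a successful derivation gives an index sequence read off from the units. Equality of the final strands as tagged words, after erasing tags, gives $\alpha_{i_1}\cdots\alpha_{i_k}=\beta_{i_1}\cdots\beta_{i_k}$.

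\textbf{Main obstacle.} The hard step is the converse. The full-index tagging makes the upper and lower letters of a domino $j$ tagged with $j$, but they end up at different positions of $w$, so tags $j$ in one strand may align with tags $j'$ in the other. The remedy I would try is to tag each letter with the pair (index of the current domino, number of dominoes used so far mod 2). The cleanest choice, however, is Kutrib and Wendlandt's own encoding: append a copy of the index word, so that $w$ has the form $\alpha_{i_1}\cdots\alpha_{i_k}\,\flat\, i_k\cdots i_1$. Attaching the index symbols on a separate part of the word allows the ending phase to verify that both strands recorded the same index sequence. Checking that the ending units can perform this verification with bounded overlap is where the real work lies.
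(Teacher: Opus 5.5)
\textbf{There is a genuine gap: the proposal ends without a checked construction, and its central obstacle comes from misreading the SAS derivation relation.} The paper quotes this theorem from Kutrib and Wendlandt without proof, remarking only on the ``direct connection'' of SAS to PCP, so your proposal has to stand on its own. It closes by deferring the decisive step to ``where the real work lies''. In a SAS each derivation step attaches \emph{one} unit $(ax,by)$, which appends $x$ to the upper strand and $y$ to the lower strand at the same time. So the sequence of units is automatically the same on both strands, and there is nothing to synchronize. The first symbols $a,b$ only impose a one-letter overlap with the current last letters, and you can neutralize this by providing a copy of every domino for each possible pair of overlap letters.

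Your tagging schemes damage the \emph{forward} direction, not the converse. In a genuine PCP solution the domino boundaries of the two strands do not line up, so the tagged upper and lower words differ and solutions are lost. The converse is harmless, since erasing tags preserves equality. The ending unit $(\flat,\flat)$ can never be attached, because its first symbol would have to equal the current last symbol of each strand. Your fallback of appending $\flat\, i_k\cdots i_1$ is borrowed from the Post-language proofs for context-free grammars, and you give no construction for it. It is also doubtful that a SAS can emit the index word reversed: both strands grow left to right, and the lag between them behaves like a queue rather than a stack.

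\textbf{A direct reduction works.} Take a PCP instance over $\Gamma$ with nonempty $\alpha_i,\beta_i$, and put $\Sigma=\Gamma\cup\{\$,\flat\}$, $A=\{(\$,\$)\}$, $T=\{(a\alpha_i,b\beta_i) : a,b\in\Gamma\cup\{\$\},\ 1\le i\le n\}$ and $E=\{(c\flat,c\flat) : c\in\Gamma\}$; these respect the required forms of axioms and ending units. A successful derivation consists of the axiom, units for some $i_1,\dots,i_k$, and one ending unit. Here $k\ge 1$ is forced, because an ending unit needs a last letter in $\Gamma$. The final strands are $\$\alpha_{i_1}\cdots\alpha_{i_k}\flat$ and $\$\beta_{i_1}\cdots\beta_{i_k}\flat$, so their equality is exactly a PCP solution. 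Conversely, suppose $i_1\cdots i_k$ solves the instance with word $w$. Every intermediate strand is then a prefix of $\$w$, so condition (b) holds at each step and the needed overlap letters are always available, giving $\$w\flat\in L(S)$. Hence $L(S)=\{\$w\flat : w \mbox{ a solution word}\}$ is nonempty if and only if the instance is solvable, and Post's theorem gives the claim.
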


We recall that there are variants of SAS called  free SAS, one-set SAS and pure SAS \cite{SAS2}, such that based on the direct connection of these systems to PCP, it is undecidable for each of them whether the generated language is nonempty.

 Since in this paper  we focus on \5 SAS, we define them below.

\subsection{Definitions of \5 SAS} %

In this paper, we investigate a new variant, the $5'\to 3'$ SAS, in which, from mathematical %
point of view, in the beginning, the two strands are growing somewhat independently in opposite directions. However, when a nonempty suffix of the first strand and a nonempty prefix of the second strand coincide, the two strands may join on this part. After the strands are joined, they must match to the already generated other strand: and finally, the generation produces a string of the defined language if and only if %
 the
two strings of the strands coincide.

Then, formally we define our new concept as follows.
\begin{definition}\label{def:53SAS}
A $5'\to 3'$
\emph{string assembling system} ($5'\to 3'$ SAS, shortly) is a quadruple $(\Sigma, A, T, E)$,
where %
\begin{itemize}
\item $\Sigma$ is the alphabet, that is, a finite, nonempty set of symbols or letters;
\item $A, T, E  \subset \Sigma^+ \times \Sigma^+$  are the finite,
 not necessarily disjoint %
sets of axioms, %
assembly units, and %
ending assembly units, respectively. %
\end{itemize}

Let $S = (\Sigma, A, T, E)$ be a $5'\to 3'$ SAS.
The derivation relation $\Rightarrow$  is defined on %
$\Sigma^+ \times \Sigma^+$  by
 $(u, v) \Rightarrow (ux,yv)$ %
 if %
 $u = ta$, $v = bs$, and $(ax, yb) \in T \cup  E$, for $a, b \in \Sigma$, $x, y, s, t \in \Sigma^*$. (Thus, in fact, $(ux,yv) = (tax, ybs)$.)

A derivation is said to be successful if it initially starts with an axiom, an element of $A$,
continues with assembling units from $T$ (if any), and ends with an ending assembling unit
from $E$. The process necessarily stops when an ending assembly unit from $E\setminus T$ is added.
When an element of $E\cap T$ is used, the process may stop or may continue.
 The sets $A$, $T$, and $E$ are not necessarily disjoint.

The language $L(S)$ generated by $S$ is defined to be the set
$$L(S) = \{ w \in\Sigma^+  | (p, q) \Rightarrow^* (w, w) \text{ is a successful derivation}\}.$$
\end{definition}

The class of languages that can be generated by $5'\to 3'$ SAS is called $5'\to 3'$ SAS languages and denoted by $\mathcal{L}_{53SAS}$.

If the alphabet has only one letter, then the corresponding $5'\to 3'$ SAS is a \emph{unary $5'\to 3'$ SAS}.  

Further in this section, to show the generative power of our newly defined systems, we give some characteristic examples.

\begin{example}\label{exa-nnn}
Let $S= (\{1,2,3\},\{(1,3)\},\{(11,33),(12,23),(22,2),(23,2),(33,22), (3,12),(3,11)\},$ \\ $\{(3,1)\})$.
The  language $L(S)$ generated by $S$ is $\{1^n 2^n 3^n ~|~ n \in \mathbb N \}$. Each derivation starts with the sole axiom fixing that any word of the language must starts with a $1$ and finishes with a $3$. In the first phase of the derivation, by $(11,33)$, the same number, let us say $n$, $1$-s and $3$-s are generated as the prefix and the suffix of the word. Then, applying the derivation step by $(12,23)$, in the second phase using $(22,2)$ let us say $m$ times (with $m\geq 0$), the block of $2$-s is built in the upper strand, and this will be between the blocks of $1$s and $3$s. By using $(23,2)$, the derivation enters in its third phase, where with $(33,22)$, the length of the block of $3$s is checked (by building it in the upper strand) to the block of $2$s which is now built in the lower strand. To have a successful derivation, $m=n$ must hold, since after applying unit $(3,12)$, the upper strand cannot be further extended. In this last phase of the derivation, the lower strand is extended from right to left to build the block of $1$s, which is the prefix of the derived word. As the derivation is successful only when the upper and lower strands are identical, each word of the language has the three blocks in the given order and these blocks must have equal lengths.
The language $L(S)$ is a well-known  context-sensitive and not context-free language.\qquad \qquad \qquad \qquad  \qquad \qquad \qquad \qquad \qquad \qquad \qquad \qquad \ \ \ \ \ $\circ$
\end{example}

\begin{example}\label{exa-pali}
Let $S= (\{1,2\},\{(1,1),(2,2)\},\{(111,111),(112,211),(121,121), (211,112),(221,122),$ \\ $(212,212),(122,221),(222,222)\},\{(1,1),(2,2)\})$. The language $L(S) = \{w \in \{1,2\}^* ~|~ |w| \text{ is odd and }$ $ w= w^R \}$,  where $w^R$ is the reversal of the word $w$.
By the axioms, any word of the language must have the same letter as initial and last letter. Then, step by step, as longer prefix and suffix are determined, they are always the reversal of each other, moreover, the length (of each strand) is always increasing by $2$.
This language is the language of odd-length palindromes over $\{1,2\}$, 
 a well-known linear
language in 2detLIN \cite{2detLIN}. \qquad
  \qquad \qquad \qquad \qquad \qquad \qquad \qquad \qquad \qquad \qquad \qquad \qquad \qquad \qquad \qquad \qquad \quad \  $\circ$
\end{example}
In a similar manner even palindromes and also the set of palindromes can be generated. These examples also highlight the difference between SAS and \5 SAS, as it is shown in \cite{SAS} that (some variants of) the language of palindromes cannot be generated by any SAS. For further details about relations of $\3$ and other classes of languages, we refer to \cite{53SAS-hier}.
  
On the one hand, as we have shown in our examples \5 SAS can generate some relatively complex languages, however, on the other hand, there are regular languages that cannot be generated by them, e.g., the language $1+11+(111)^+$ is not in $\3$: %

\begin{proposition}\label{prop0:notIn}
The unary regular language $L$ described by the regular expression $1+11+(111)^+$ cannot be generated by any \5 SAS.
\end{proposition}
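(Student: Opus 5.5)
The plan is to exploit the fact that over a one-letter alphabet the derivation relation of Definition~\ref{def:53SAS} becomes purely arithmetical. A unit $(1^i,1^j)$ can always be applied to any pair $(1^p,1^q)$ with $p,q\ge 1$, and it produces $(1^{p+i-1},1^{q+j-1})$. So I will identify each unit with its \emph{increment vector} $(i-1,j-1)\in\mathbb{N}_0^2$ and each axiom with its length vector. A successful derivation is then exactly an axiom $\alpha\in A$, a finite multiset $m$ of units of $T$, and one ending unit $e\in E$, subject to $\alpha+\sum m+\mathrm{inc}(e)=(n,n)$; the generated word is $1^n$. The order of the units is irrelevant, and the units of $T$ may be repeated freely. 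Consequently the axiom, the $T$-multiset and the ending unit of different successful derivations can be recombined, and every recombination is again a legal derivation.

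First I use $1\in L$. Lengths never decrease, so the only successful derivation of $1$ starts from the axiom $(1,1)\in A$, uses only units with increment $(0,0)$, and ends with $(1,1)\in E$. Hence, for every multiset $m$ over $T$ with $\sum m=(s,s)$, the word $1^{1+s}$ lies in $L(S)$.

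Next I take a successful derivation of $1^6$ with data $(\alpha,m,e)$. I then split into cases according to whether $\sum m$ is diagonal.

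\emph{Diagonal case}, $\sum m=(s,s)$. If $s>0$, the previous step gives $1^{1+s}\in L$, so $1+s\in\{1,2,3,6,9,\dots\}$. The recombination with data $(\alpha,2m,e)$ yields $1^{6+s}$, so also $6+s\in\{6,9,12,\dots\}$. For $s\le 5$ these two conditions are incompatible, and $s\le 5$ holds because $\alpha$ has positive coordinates. This is a contradiction. The same argument applied to $1^3$ forces $s=0$ there as well.

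\emph{Non-diagonal case}, together with the degenerate case $s=0$. Here I will combine two derivations, $X$ for $1^3$ and $Y$ for $1^6$. The recombination with axiom $\alpha_X$, multiset $m_X+k\,m_Y$ and ending unit $e_Y$ generates
\[(3,3)+(6,6)-\alpha_Y-\mathrm{inc}(e_X)+(k-1)\textstyle\sum m_Y .\]
I will also use the cross term with $X$ and $Y$ interchanged, and the symmetric choices of $k$. The goal is to find integer combinations whose two coordinates coincide and whose common value is $4$, $5$, $7$ or $8$. Since $\alpha_X+\sum m_X+\mathrm{inc}(e_X)=(3,3)$ is diagonal, the non-diagonal parts of $\alpha_X$, $\sum m_X$ and $\mathrm{inc}(e_X)$ must cancel, and likewise for $Y$. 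I hope to exploit this cancellation: by choosing which axiom and which ending unit to keep, one can make the non-diagonal parts cancel in a recombination.

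This last case analysis is the main obstacle. The cancellation can be distributed between the axiom, the $T$-part and the ending unit in several ways, and each distribution needs its own recombination. If the direct recombinations do not close all cases, I would fall back to a more structural statement: the set $\{n : 1^n\in L\}$ is a finite union of sets of the form $\{c+\ell\}$, where $\ell$ ranges over the diagonal projection of a finitely generated submonoid of $\mathbb{N}_0^2$. I would then show that such a set cannot contain $1,2,3$ and all multiples of $3$ while avoiding $4$ and $5$, using the fact that $1\in L$ forces every diagonal $T$-combination to be realised from offset $1$.
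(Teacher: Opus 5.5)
Your arithmetic reformulation and your diagonal case are correct. The remaining case, a non-diagonal $T$-part (into which you also fold $s=0$), is a genuine gap, and the recombination strategy you propose for it cannot work. Every recombination of successful derivations is again a successful derivation of the same system. So recombining derivations of $1$, $1^3$ and $1^6$ can only produce a word of length $4,5,7$ or $8$ if the facts you use are already contradictory, and they are not. The system with $A=\{(1,1),(11,11),(111,111),(1^4,1^6)\}$, $T=\{(111,1)\}$, $E=\{(1,1)\}$ generates exactly $\{1,11,111,1^6\}$, and its derivation of $1^6$ has the non-diagonal $T$-part $(2,0)$. Replacing the last axiom by $(1^6,1^6)$ and taking $T=\emptyset$ gives the same language with $s=0$. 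So no choice of $X$, $Y$ and $k$ closes the case. More generally, every finite $F\subseteq L$ is generated by $A=\{(w,w): w\in F\}$, $E=\{(1,1)\}$, $T=\emptyset$. Hence no argument that inspects derivations of finitely many fixed words can succeed; what your argument never uses is that $L$ is \emph{infinite}.

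This is exactly where the paper's proof uses the infinity of $L$. Suppose every unit of $T$ with nonzero increment lies strictly on one side of the diagonal. Then each such unit changes the difference of the two strand lengths by at least $1$ in the same direction. A successful derivation therefore contains at most $\max|o_1-o_2|$ of them, where $(o_1,o_2)$ ranges over the finitely many sums of an axiom and an ending increment, and $L(S)$ is finite. Hence $T$ has either a unit with increment $(a,a)$, $a\ge 1$, or units with increments $(a,b)$, $a>b$, and $(c,d)$, $c<d$. In the latter case, $d-c$ copies of the first and $a-b$ copies of the second sum to a nonzero diagonal vector $(s,s)$. With the axiom and ending unit $(1,1)$ this gives $1^{1+ks}\in L(S)$ for all $k\ge 0$. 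Since $s$ is no longer bounded by $5$, you need a general check in place of yours: $1+s$, $1+2s$, $1+3s$ cannot all lie in $\{1,2\}\cup 3\mathbb{N}$. For $s=1$ one gets $4$; for $s=2$ one gets $5$; for $s\ge 3$ with $s\not\equiv 2 \pmod 3$, already $1+s$ fails; and for $s\equiv 2 \pmod 3$, $s\ge 5$, $1+2s$ fails. Your structural fallback could also be completed. The diagonal part of the monoid generated by the $T$-increments, shifted by $1$, must lie in the length set, so it is trivial and the length set is finite. But that is the same finiteness argument reached through a heavier semilinear decomposition, which you would still have to prove.
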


\begin{proof}
The proof is by contradiction, thus let us assume that there is a \5 SAS $S$ that generates $L$.

Since $1\in L$, both $A$ and $E$ must have the pair $(1,1)$.
Now, by the finite sets $A$ and $E$ without using any elements of $T$, only finitely many elements of $L$ can be obtained, thus we need some elements in $T$.
Then we have the following cases:
\begin{enumerate}
\item Let us assume that there is an element $(1^n,1^n) \in T$ for some $n > 1$. (The case $n=1$ is useless for generation.) %
Then, applying it once together with the axiom and ending unit $(1,1)$, the word $1^n$ is derived. Thus $1^n \in L$ must hold. Then there are two subcases.
\begin{enumerate}
\item If $n=2$, then, in fact, with this assembling unit any element of $1^+$ can be obtained, i.e., applying it $k \geq 0$ times, the word $1^{k-1}$ is derived. However, this contradicts to the fact that there are some words, e.g., $1^4$ and $1^5$ (%
of $1^+$), that are not in $L$.
\item If $n = 3k$ for some positive integer $k$, then clearly $1^n \in L$. Now, applying this unit twice in the derivation (with the same axiom and ending assembly unit $(1,1)$) the word $1^{2n-1}$ is derived. However, this word is not in $L$. We have arrived again to a contradiciton. Thus, $T$ cannot contain any unit with the same length, i.e., identical words (apart from the useless $(1,1)$).
\end{enumerate}
\item If there is an element $(1^n,1^m) \in T$ with the property $n>m$, then there must also be an element $(1^k,1^j)$ with $k<j$, otherwise only finitely many derivations exist contradicting to the infinity of $L$. Now starting and ending with unit $(1,1)$ and applying unit $(1^n,1^m)$ $j-k$ times and the unit $(1^k,1^j)$ $n-m$ times, the word $1^{n(j-k)+k(n-m) - (j-k+n-m-1)} = 1^{j(n-m)+m(j-k)-(j-k+n-m-1)}$ can be derived, thus it must be in the language. Thus, $nj-mk-(j-k)-(n-m)+1$ is either 2 or a multiplier of 3. However, here we can use the same argument as for item 1 above, i.e., by using double, triple, or 4-times more the elements, such words are also generated that do not belong to $L$. That is a contradiction.
\item Finally, if there is a unit $(1^n,1^m) \in T$ with $n<m$, then, similarly to the previous case, either only finitely many derivations (and derived words) exist or there must also be an element $(1^j,1^k) \in T$ with $j>k$. In the latter case the same argument works as in case 2.
\end{enumerate} 

The proof has been finished. %
\end{proof}

\section{The reverse PCP}\label{sec:rPCP}

This section has two parts, first we formally define the reverse PCP and then we shall prove an undecidability result about it.

Let a finite set of dominoes $D= \{d_1,\dots,d_n\}$ be given, where each domino is represented by a pair $d_i = (\alpha_i,\beta_i) \in \Sigma^* \times \Sigma^*$. For a sequence of dominoes  $w \in D^*$ (or, similarly, to an indexword)
the upper word $u(w)$ is assigned as follows:
 \begin{itemize} 
\item
if $w=\lambda$, then so $u(w)=\lambda$; 
\item
 if $w=w_0 d_i$ for $w_0\in D^*$, %
then $u(w)=u(w_0) \cdot \alpha_i$.
\end{itemize}
In a similar manner the lower word $\ell(w)$ is also assigned as %
follows: 
\begin{itemize}
\item
if $w=\lambda$, then so $\ell(w)=\lambda$;
\item
if $w=w_0 d_i$ for $w_0\in D^*$, %
then $\ell(w)=\ell(w_0) \cdot \beta_i$.
\end{itemize}

The reverse Post Correspondence Problem (reverse PCP) is to decide whether there exists a finite sequence $w \in D^+$ of dominoes such that the assigned upper and lower words are reversals of each other, i.e., $u(w) = (\ell(w))^R$.

Let us define another variant of the PCP that is closely connected to the reverse PCP.

The opposite direction Post Correspondence Problem (odPCP) is as follows.

Given the finite set of dominoes (as above), their upper parts are placed next to each other from left to right (as usual, and thus the upper word is defined exactly as for the classical and also for the reverse PCP) and the lower parts are placed next to each other from right to left, formally:
$\ell'(w)$ is assigned to $w\in D^*$ as
follows: \begin{itemize}
\item if $w=\lambda$, then so $\ell'(w)=\lambda$;
\item if $w=w_0 d_i$ for $w_0\in D^*$,
then $\ell'(w)=\beta_i \cdot \ell'(w_0)$.
\end{itemize}
The decision problem of odPCP is to decide whether there exists a finite sequence $w \in D^+$ of dominoes such that the assigned upper and lower words are the same, i.e., $u(w) = \ell'(w)$.

\begin{theorem}\label{thm: rev-od}
The reverse PCP and the odPCP are equivalent to each other, that is
\begin{itemize}
\item for any reverse PCP instance $\Gamma$, there is an odPCP instance $\Gamma'$ such that $\Gamma$ has a solution if and only if $\Gamma'$ has a solution, moreover there is a bijection between their solution sets; and
\item for any odPCP instance $\Gamma$, there is a reverse PCP instance $\Gamma'$ such that $\Gamma$ has a solution if and only if $\Gamma'$ has a solution, moreover there is a bijection between their solution sets.
\end{itemize}
\end{theorem}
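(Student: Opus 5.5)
The plan is to use the same transformation in both directions: reverse the lower word of every domino and keep the upper word unchanged. Given an instance $\Gamma$ with dominoes $D=\{(\alpha_1,\beta_1),\dots,(\alpha_n,\beta_n)\}$, let $\Gamma'$ have dominoes $D'=\{(\alpha_1,\beta_1^R),\dots,(\alpha_n,\beta_n^R)\}$. Since $(\beta^R)^R=\beta$, this map is an involution on instances, so one construction covers both items of Theorem~\ref{thm: rev-od}. The bijection between solution sets is the identity on index words: $i_1\dots i_k$ is sent to $i_1\dots i_k$, i.e., $d_{i_1}\dots d_{i_k}\mapsto d'_{i_1}\dots d'_{i_k}$.

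The key lemma relates $\ell$ on one instance to $\ell'$ on the other. For every $w=d_{i_1}\dots d_{i_k}\in D^*$ with image $w'\in D'^*$, I will show
\[
\ell'(w')=\beta_{i_k}^R\cdots\beta_{i_1}^R=(\beta_{i_1}\cdots\beta_{i_k})^R=(\ell(w))^R .
\]
The proof is by induction on $|w|$, following the recursive definitions. The base case is $\lambda$. In the inductive step, for $w=w_0d_i$ we have $\ell'(w'_0d'_i)=\beta_i^R\,\ell'(w'_0)=\beta_i^R(\ell(w_0))^R=(\ell(w_0)\beta_i)^R=(\ell(w))^R$, using $(xy)^R=y^Rx^R$. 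Clearly $u(w')=u(w)$, since the upper parts are unchanged. Applying the same identity with the roles of the two instances swapped gives $\ell'(w)=(\ell(w'))^R$ as well.

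The equivalence now follows. For the first item, $w$ solves the reverse PCP instance $\Gamma$ iff $u(w)=(\ell(w))^R$, iff $u(w')=\ell'(w')$, i.e., iff $w'$ solves the odPCP instance $\Gamma'$. For the second item, $w$ solves the odPCP instance $\Gamma$ iff $u(w)=\ell'(w)=(\ell(w'))^R$, iff $u(w')=(\ell(w'))^R$, i.e., iff $w'$ solves the reverse PCP $\Gamma'$. Nonemptiness is preserved because the correspondence is length-preserving on index words. The construction is obviously computable.

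I do not expect a real obstacle here. The only point needing care is keeping the right-to-left concatenation order of $\ell'$ straight in the inductive step, together with the reversal identity $(xy)^R=y^Rx^R$.
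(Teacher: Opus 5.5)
Your proposal is correct and uses the same construction as the paper: replace each lower word $\beta_i$ by $\beta_i^R$ and keep the upper words. The paper states this only as a hint, and your inductive identity $\ell'(w')=(\ell(w))^R$, together with the involution observation that covers both directions, supplies exactly the details it leaves out.
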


\begin{proof} Instead of a formal proof, we give a hint: %
We show a bijection between the instances of these problems.
If, in each domino, instead of the original lower word $\beta_i$, its reversal $\beta_i^R$ is used, then using these modified dominoes,
the problem is switched from odPCP to reverse PCP and vice versa. %
\end{proof}

Therefore, the reverse Post Correspondence Problem is equivalent to a PCP in which the directions to put the upper and lower parts of the dominoes next to each other is opposite to each other. That is a good point for our aim, i.e., to connect a PCP type decision problem to \5 SAS. We will come back to this point in the next section, however, first let us show undecidability results for these new PCP variants.

\subsection{Undecidability of odPCP}\label{sec:subUndecPCP} %

Let a TM be given as the tuple $(Q,V,\Sigma,q_0,\hash,\delta,Q_f)$.

The idea of the proof is to embed Turing machine computations into the upper and lower words of our PCP. For this we will write the configurations of the Turing machine by strings of the form $uqv$ where $u\in\Sigma^*$, $q\in Q$ and $v\in \Sigma^+$ such that $u$ does not contain any (prefix) $\#^+$ and $v\in (\Sigma\setminus \{\#\})^+ \cup \{\#\}$ as we have described in the preliminaries.  

Consequently, the alphabet for the reverse PCP is $\Sigma_1 = \Sigma \cup Q$. Further, for technical reasons, let us make disjoint copies of $\Sigma_1$: $\Sigma_2 = \{a'~|~a\in \Sigma_1\}$ with $\Sigma_1 \cap \Sigma_2 = \emptyset$ and
$\Sigma_3 = \{a''~|~a\in \Sigma_1\}$ with $\Sigma_1 \cap \Sigma_3 = \emptyset$ and $\Sigma_2 \cap \Sigma_3=  \emptyset$.
 Further, let $\Sigma_4 = \{a'''~|~a\in \Sigma_1\}$ with $\Sigma_1 \cap \Sigma_4 = \emptyset$, $\Sigma_2\cap \Sigma_4=  \emptyset$ and $\Sigma_3\cap \Sigma_4=  \emptyset$.
Now let $\Sigma' = \Sigma_1 \cup \Sigma_2 \cup \Sigma_3 \cup \Sigma_4
\cup \{ \$, @ \}$ with separator symbols $\$ , @ \not\in \Sigma_1 \cup \Sigma_2 \cup \Sigma_3 \cup \Sigma_4$ the alphabet of the odPCP.

For a given input $w\in V^+$, the computation starts with configuration $c_0 = q_0 w$.
Now, our aim is to prepare a set of dominos for an odPCP that can be assembled in a way such that they describe halting computations of a given TM in the following way in both strands: $$\$c_0 \$ c'_1 \$ \dots \$ c'_{n-1} \$ c'''_n \$ \$ @ (c''_{n})^R@ (c''_{n-1})^R  @ \dots @ (c''_1)^R,$$ where the halting computation is $c_0 \Rightarrow c_1 \Rightarrow \dots \Rightarrow c_{n}$ and the primed, double and triple primed versions of the configurations are written with the alphabets $\Sigma_2$, $\Sigma_3$ and $\Sigma_4$, respectively.

Thus, let us design the following set $D$ of dominoes:

To build $c_0$ and the first computation step by making also $(c''_1)^R$ at the same time:
\begin{itemize}
\item $(\$q_0a,b''q'')$ if $(q,b,stay) = \delta(q_0,a)$ with $a\in V$, $b\in \Sigma$, $q_0\in Q\setminus Q_f$, $q\in Q$, where $b'',q''\in \Sigma_3$ are representing $b$ and $q$;
\item $(\$q_0a,q''b'')$ if $(q,b,right) = \delta(q_0,a)$ with $a\in V$, $b\in \Sigma$, $q_0\in Q\setminus Q_f$, $q\in Q$, $b'',q''\in \Sigma_3$ are representing $b$ and $q$;
\item $(\$q_0a,b''\#'' q'')$ if $(q,b,left) = \delta(q_0,a)$ with $a\in V$, $b\in \Sigma$, $q_0\in Q\setminus Q_f$, $q\in Q$, $b'',q'',\#'' \in \Sigma_3$ are representing $b$, $q$ and $\#$;
\end{itemize}
and to copy the remaining input to the new configuration:
\begin{itemize}
\item $(a,a'')$ for all $a\in V$ where $a''$ represents $a$ in $\Sigma_3$.
\end{itemize}
Then putting the markers $(\$,@)$.

Now, in a similar manner, we build dominoes for assuring that the configuration built in the upper strand (in a primed manner) is followed in the computation by the configuration built in the lower strand (in reversed and double primed manner).
\begin{itemize}
\item $(a',a'')$ for each $a\in\Sigma$ to write and copy the part of the tape that is before and after the head;
\end{itemize}
And for the part where the computation step changes the configuration:
\begin{itemize}
\item $(q'a', b''p'') $, if $(p,b,stay) = \delta(q,a)$ with $a,b\in \Sigma$, $q\in Q\setminus Q_f$, $p\in Q$ with their primed and double primed versions in $\Sigma_2$ and $\Sigma_3$, respectively;
\item $(q'a'd' , d''p''b'') $, for all $d\in \Sigma$ if $(p,b,right) = \delta(q,a)$ with $a,b\in \Sigma$, $q\in Q\setminus Q_f$, $p\in Q$  with their primed and double primed versions in $\Sigma_2$ and $\Sigma_3$, respectively;
\item $(q'a'\$ , @\#''p''b'') $, if $(p,b,right) = \delta(q,a)$ with $a,b\in \Sigma$, $q\in Q\setminus Q_f$, $p\in Q$  with their primed and double primed versions in $\Sigma_2$ and $\Sigma_3$, respectively (for the case when there was a right move from the last used tape cell);
\item $(c'q'a' ,b''c''p'') $, for all $c\in\Sigma$ if $(p,b,left) = \delta(q,a)$ with $a,b\in \Sigma$, $b\ne \hash$, $q\in Q\setminus Q_f$, $p\in Q$ with their primed and double primed versions in $\Sigma_2$ and $\Sigma_3$, respectively;
\item $(c'q'a' ,c''p'') $, for all $c\in\Sigma$ if $(p,\#,left) = \delta(q,a)$ with $a\in \Sigma$,  $q\in Q\setminus Q_f$, $p\in Q$ with their primed and double primed versions in $\Sigma_2$ and $\Sigma_3$, respectively;
\item $(q'a' ,b''\#''p'') $,  if $(p,b,left) = \delta(q,a)$ with $a,b\in \Sigma$, $b\ne \hash$, $q\in Q\setminus Q_f$, $p\in Q$ with their primed and double primed versions in $\Sigma_2$ and $\Sigma_3$, respectively;
\item $(q'a' ,\#''p'') $,  if $(p,\#,left) = \delta(q,a)$ with $a\in \Sigma$,  $q\in Q\setminus Q_f$, $p\in Q$ with their primed and double primed versions in $\Sigma_2$ and $\Sigma_3$, respectively.
\end{itemize}

In this way, from the dominoes in the upper strand a sequence $\$ c_0\$ c'_1\$ \dots \$ c'_{m-1} $ can be assembled while in the lower strand the string $ (c''_m)^R @ \dots @(c''_1)^R$. 
\\
So far, no common characters are used, thus the two strands cannot overlap up to this point.

Further, we  continue it to the halting configuration:

We assumed that at a halting computation the head of the TM is set to the right end of the input, this can always be done by 
a step to the right arriving over a $\#$. %
\\
Thus, we assume that TM reaches (one of its) final state(s) in a right step. To simulate such computation step (and the halting), the following dominoes are designed:
\begin{itemize}
\item $(q'a'\$ ,\$ \$ @\#''p''b'') $, if $(p,b,right) = \delta(q,a)$ with $a,b\in \Sigma$, $q\in Q\setminus Q_f$, $p\in Q_f$  with their primed and double primed versions in $\Sigma_2$ and $\Sigma_3$, respectively (%
when a halting configuration is reached). %
 \end{itemize}

With the dominoes
\begin{itemize}
\item $(a''',\lambda)$ for all $a\in \Sigma$ and %
\item $(p''' \#''' \$ \$ @, \lambda)$  for each $p\in Q_f$
\end{itemize}
a copy of the accepting configuration $c'''_n$ can be obtained on the upper strand.
This could be the first step when the two strands overlap.

So far, it was checked that the configuration written in $(c''_1)^R$ follows configuration $c_0$; moreover, each of the configurations of $(c''_{i+1})^R$ follows the configuration written in the upper strand as $c'_i$. However, none of the sequences are checked if they really follow each other or even, if they contain real configurations. We can be sure only about $c_0$ that it is a correct configuration (as it is written by $\Sigma_1$), and based on that also $(c''_1)^R$ must represent a valid configuration.

Thus, by building the missing parts of the two strands we need to check if they contain the same configurations, i.e., if $c'_2$ represents the same as $(c''_2)^R$ and, generally, if the configurations of $c'_i$ and $(c''_i)^R$ coincide for each $i\in\{2,\dots,n\}$.
\begin{itemize}
\item $(a'',a''')$ for each $a\in \Sigma_1$ with its primed and triple primed version;
\item $(a''@,\$a''')$ for each $a\in \Sigma$ with its primed and triple primed version;
\item $(a'',a')$ for each $a\in \Sigma_1$ with its primed and double primed version; and
\item $(a''@,\$a')$ for each $a\in \Sigma$ with its primed and double primed version.
\end{itemize}

Observe that in the lower strand elements of $\Sigma_2$ written exactly when corresponding elements of $\Sigma_3$ are written on the upper strand. Moreover,
to allow to have the markers between the configurations in a correct way, we allow them to put only at the end of a configuration (that is a nonempty string written by the corresponding alphabet).

In this way, when these parts of both strands are built, we can be sure that the configurations coded in the first half (left to right till the triple marker $\$\$@$) and the configurations from right to left are matching, i.e., they and their sequences $c_1,\dots,c_n$ are identical, respectively. %

Only one thing is missing,
to finish the generation of both strands we need to make the original initial configuration on the lower strand. For this, we make the dominoes:
\begin{itemize}
\item $(\lambda,a \$ )$ for each $a\in V$;
\item $(\lambda,a)$ for each $a\in V$; and finally,
\item $(\lambda, \$q_0a)$ for each $a\in V$ and with the initial state $q_0$.
\end{itemize}

Now, one can see that if we restrict the computations for obtaining double, identical strings only of the following regular form:
$$ \$q_0 V^+ \$ (\Sigma^+_2 \$)^* \Sigma^+_4 \$ \$ (@ \Sigma^+_3 )^+ ,$$
then, in fact, the odPCP `simulates' the accepting computations of the given Turing machine, as it is described above.

Based on the construction above and the well-known undecidability result, %
Theorem \ref{thm:T}, %
we have just proven the following theorem.

\begin{theorem}\label{thm:rPCP}
The opposite direction Post Correspondence Problem with regular filter is undecidable, i.e., there is no algorithm in general that can decide if for any given opposite direction PCP $\Gamma$ and a regular language $L$, $\Gamma$ has a solution that belongs to $L$.
\end{theorem}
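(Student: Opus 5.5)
We will prove Theorem \ref{thm:rPCP} by a many-one reduction from the halting problem of Theorem \ref{thm:T}. Given a Turing machine $M=(Q,V,\Sigma,q_0,\hash,\delta,Q_f)$, normalised as assumed in the preliminaries (halting by a right move onto $\hash$, a single contiguous tape segment), we effectively build the domino set $D$ above. We take as filter the regular language
$$L=\$q_0 V^+ \$ (\Sigma^+_2 \$)^* \Sigma^+_4 \$ \$ (@ \Sigma^+_3 )^+ .$$
Both $D$ and $L$ are computable from $M$. The claim to establish is: there is $w\in D^+$ with $u(w)=\ell'(w)\in L$ if and only if $M$ halts on some input $x\in V^+$, i.e.\ $M$ accepts a nonempty language. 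Undecidability then follows from the second part of Theorem \ref{thm:T}, taking $Q_a=Q_f$. If one prefers to reduce from halting on a fixed input, replace $V^+$ in $L$ by the fixed word instead.

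\textbf{Soundness (halting implies a solution).} Let $c_0=q_0x\Rightarrow c_1\Rightarrow\dots\Rightarrow c_n$ be a halting computation. We list the dominoes explicitly in the order prescribed by the construction. First the initial domino together with the $(a,a'')$ copies builds $\$c_0$ and $(c''_1)^R$. Then, for each $i<n-1$, step dominoes build $\$c'_i$ on the upper strand while prepending $@(c''_{i+1})^R$ on the lower strand. The halting domino follows, then the $(a''',\lambda)$ dominoes and $(p'''\#'''\$\$@,\lambda)$. Next come the matching dominoes $(a'',a''')$, $(a''@,\$a''')$, $(a'',a')$ and $(a''@,\$a')$. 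Finally the dominoes $(\lambda,\cdot)$ write $\$q_0x\$$ at the left end of the lower strand. We check that the upper word is the concatenation left to right and the lower word is the concatenation right to left, and that both equal the displayed string, which lies in $L$. This is bookkeeping by induction on $i$.

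\textbf{Completeness (a filtered solution implies halting).} This is the main obstacle. Let $z=u(w)=\ell'(w)\in L$. We decompose $z$ by the markers. Since $\$\$@$ occurs exactly once in $z$, and only the halting domino and $(p'''\#'''\$\$@,\lambda)$ produce it (in the lower and upper strand respectively), each is used exactly once. We then argue by alphabet separation. Symbols of $\Sigma_2$ and $\Sigma_3$ in the lower strand come only from step dominoes or matching dominoes, in fixed positions. In the upper strand, $\Sigma_3$ symbols come only from the matching dominoes, which emit in the lower strand the primed copy of the same letter. Consequently the $i$-th block $c'_i$ of the left half spells the same configuration as the $i$-th reversed block of the right half. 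The step dominoes force each reversed right block to be the $\delta$-successor of the preceding left block, and the first domino forces block $1$ to be the successor of $c_0=q_0x$.

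The delicate point is that the step dominoes only make sense when each block really is a well-formed configuration in $\Sigma^*Q\Sigma^+$ with one state. We will get this inductively: $c_0$ is well formed because of the shape of $L$ and the initial domino, and a successor of a well-formed configuration produced by these dominoes is well formed. We must also handle the boundary cases (the $\hash$ dominoes for left moves at the tape edge, and the right move past the last cell via $\$$/$@$). We will do this by matching each case against the computation-step clauses in the preliminaries. Chaining the identifications then yields a valid computation $c_0\Rightarrow\dots\Rightarrow c_n$ ending in a state of $Q_f$, so $M$ halts on $x$.
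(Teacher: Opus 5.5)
\textbf{The completeness step fails for the paper's dominoes.} You follow the paper's reduction exactly: the same dominoes $D$, the same filter $L$, and a reduction from emptiness of Turing machines. But the step you call delicate genuinely fails for this $D$. It is not true that the step dominoes force each reversed right block to encode the $\delta$-successor of the preceding left block, because the boundary dominoes carry no context. For instance, $(q'a',b''\#''p'')$ is meant for a left move off the left end ($qav\Rightarrow p\#bv$), but it can be used after some $(c',c'')$ dominoes. Then a block $u'q'a'v'$ with $u=u_1\cdots u_k\neq\lambda$ yields the reversed block of $up\#bv$ instead of the true successor $u_1\cdots u_{k-1}pu_kbv$. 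Likewise, $(c'q'a',c''p'')$ and $(q'a',\#''p'')$ can be applied when $a$ is not the last (resp.\ only) symbol, producing strings that are not successors.

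This gives filtered solutions for machines with empty language. Take $V=\{a\}$ with
$\delta(q_0,a)=(q_1,a,right)$, $\delta(q_1,a)=(q_2,a,left)$, $\delta(q_2,\#)=(q_3,\#,right)$, $\delta(q_3,a)=(h,a,right)$, where $h\in Q_f$ and all other transitions are stay-loops. Every real computation loops: $q_0a\Rightarrow aq_1\#$, and $q_0a^k\Rightarrow aq_1a^{k-1}\Rightarrow q_2a^k$ for $k\ge2$. So $M$ accepts nothing. Now use the following domino sequence:
$(\$q_0a,q_1''a'')(a,a'')(\$,@)$;
then $(a',a'')(q_1'a',a''\#''q_2'')(\$,@)$, which is the fake step $aq_1a\to aq_2\#a$;
then $(a',a'')(q_2'\#'a',a''q_3''\#'')(\$,@)$;
then $(a',a'')(\#',\#'')(q_3'a'\$,\$\$@\#''h''a'')$;
then $(a''',\lambda)(\#''',\lambda)(a''',\lambda)(h'''\#'''\$\$@,\lambda)$, the matching dominoes, and $(\lambda,a\$)(\lambda,\$q_0a)$.
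Both strands equal
\begin{align*}
&\$q_0aa\$a'q_1'a'\$a'q_2'\#'a'\$a'\#'q_3'a'\$a'''\#'''a'''h'''\#'''\$\$\\
&@\#''h''a''\#''a''@a''q_3''\#''a''@a''\#''q_2''a''@a''q_1''a''\in L .
\end{align*}

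\textbf{Soundness is not mere bookkeeping either.} The closing units $(a''@,\$a')$ and $(a''@,\$a''')$ exist only for $a\in\Sigma$. So if some $c_i$ with $2\le i\le n-1$ has its head on the first cell, $(c''_i)^R$ ends with a state symbol and cannot be followed by $@$ in the upper strand. Also, for a one-letter input with an initial right move, $(\$q_0a,q''b'')$ omits the $\#''$ of $c_1=bq\#$.

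\textbf{What you need to change.} You must repair the construction, for example:
extend the closing units to all $a\in\Sigma_1$;
add $(\$q_0a\$,@\#''q''b'')$ for the one-letter case;
and either pass to a normal form of $M$ that never moves left from the first cell and never writes $\#$ (dropping the boundary dominoes), or attach the separators to those dominoes. For the latter, use e.g.\ $(\$q'a',b''\#''p''@)$ instead of $(q'a',b''\#''p'')$ and $(c'q'a'\$,@c''p'')$ instead of $(c'q'a',c''p'')$, so they can only be used at the correct end of a block.

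These defects are inherited from the paper, whose proof only asserts that ``one can see'' the simulation is exact, so appealing to its construction does not close the gap. With the repairs, your alphabet-separation and induction argument is the right way to finish.
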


Based on the equivalence of reverse PCP and odPCP (Theorem \ref{thm: rev-od}), we can also conclude:
\begin{corollary}
The reverse Post Correspondence Problem  with regular filter is undecidable, i.e., there is no algorithm in general that can decide 
 for any input pair of a reverse PCP $\Gamma$ and regular language $L$
whether  
  $\Gamma$ has a solution that belongs to $L$.
\end{corollary}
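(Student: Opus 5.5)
We derive the Corollary from Theorem \ref{thm:rPCP} by a computable many-one reduction, using the instance transformation from Theorem \ref{thm: rev-od}. Let an odPCP instance $\Gamma=\{(\alpha_1,\beta_1),\dots,(\alpha_n,\beta_n)\}$ and a regular filter $L$ be given. Define $\Gamma'=\{(\alpha_1,\beta_1^R),\dots,(\alpha_n,\beta_n^R)\}$. The pair $(\Gamma',L)$ is clearly computable from $(\Gamma,L)$, and we keep the same filter $L$.

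The core step is a short induction on the length of the domino sequence $w\in D^*$. It shows that the lower word of $\Gamma'$ under the usual left-to-right concatenation is the reversal of the odPCP lower word of $\Gamma$, that is,
$\ell_{\Gamma'}(w)=(\ell'_{\Gamma}(w))^R$.
\begin{itemize}
\item Base case: for $w=\lambda$ both sides are $\lambda$.
\item Inductive step: for $w=w_0d_i$ we have
$\ell_{\Gamma'}(w_0d_i)=\ell_{\Gamma'}(w_0)\beta_i^R=(\ell'_\Gamma(w_0))^R\beta_i^R=(\beta_i\ell'_\Gamma(w_0))^R=(\ell'_\Gamma(w))^R$.
\end{itemize}
Upper words are unchanged, so $u_{\Gamma'}(w)=u_\Gamma(w)$.

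Hence $u_\Gamma(w)=\ell'_\Gamma(w)$ holds if and only if $u_{\Gamma'}(w)=(\ell_{\Gamma'}(w))^R$. In other words, the same sequence $w$ solves $\Gamma$ as an odPCP exactly when it solves $\Gamma'$ as a reverse PCP. The produced word $u(w)$ is identical in both problems, so $w$ satisfies the filter $L$ in one problem iff it does in the other.

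Whatever the filter applies to, it is the same object on both sides: the solution sequence $w$, or the common word $u(w)=\ell'(w)$, which is exactly the object the construction preceding Theorem \ref{thm:rPCP} constrains. So the regular filter can be transferred verbatim. This point needs a brief check, since the filtered object might otherwise change, but it is benign here.

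Finally, suppose an algorithm decided the reverse PCP with regular filter. Composing it with the computable map $(\Gamma,L)\mapsto(\Gamma',L)$ would decide the odPCP with regular filter, contradicting Theorem \ref{thm:rPCP}. I do not expect a genuine obstacle. The only care needed is the one-line induction above, including the direction of reversal, and stating precisely what the filter is applied to.
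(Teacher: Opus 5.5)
Your proposal is correct and follows the paper's route: the paper derives the corollary in one line from Theorem~\ref{thm:rPCP} together with the reverse-PCP/odPCP equivalence, which replaces each lower word $\beta_i$ by $\beta_i^R$. Your induction $\ell_{\Gamma'}(w)=(\ell'_{\Gamma}(w))^R$ and your observation that the filter transfers unchanged (the upper word and the index sequence are the same in both instances) make explicit what the paper leaves as a hint.
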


\section{Undecidability of \5 String Assembly Systems }\label{sec:53SAS}

In this section, we modify our previous construction for \5 SAS in a way that the regular filter will be `built in' the system.

\begin{theorem}  
The emptiness problem for the \5 SAS is generally %
undecidable.
\end{theorem}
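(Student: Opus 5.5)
We reduce the halting problem (more precisely, non-emptiness of the language of a Turing machine, Theorem \ref{thm:T}) to non-emptiness of a \5 SAS. The point is that a \5 SAS derivation is essentially an odPCP: the upper strand grows to the right by the $x$-part of a unit, and the lower strand grows to the left by the $y$-part. The result is successful exactly when the two strands coincide. So the plan is to take the domino set $D$ built in Subsection \ref{sec:subUndecPCP} and turn each domino $(\alpha,\beta)$ into an assembly unit. The regular filter $\$q_0 V^+ \$ (\Sigma^+_2 \$)^* \Sigma^+_4 \$ \$ (@ \Sigma^+_3 )^+$ must then be enforced by the one-letter overlap condition of the derivation relation rather than by an external language.

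The key steps, in order, are as follows.

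\textbf{Step 1: context letters.} A unit $(ax,yb)$ may be applied only when the upper strand currently ends with $a$ and the lower strand currently starts with $b$. I will therefore prefix each upper part of a domino with the last letter it may legally follow, and suffix each lower part with the first letter it may legally precede. Where a domino has several admissible neighbours, I include one copy of the unit for each admissible neighbour. This local-successor condition realises the regular filter, since every piece of the filter is a concatenation of blocks over disjoint alphabets separated by markers. Dominoes with an empty component ($(a''',\lambda)$, $(\lambda,a\$)$, and so on) need care, because units live in $\Sigma^+\times\Sigma^+$. I would merge them with neighbouring dominoes, or give the empty side a pure context letter, e.g.\ $(a,\,yb)$ with $y$ empty, i.e.\ $(a,b)$ acting as a no-op. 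Note that the definition permits $x=\lambda$ or $y=\lambda$ as long as the unit letters $a,b$ exist.

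\textbf{Step 2: axioms and endings.} The axiom fixes the first letter $\$$ of the upper strand and the last letter of the lower strand. I choose it as $(\$q_0a,\; b''q'')$-type pairs coming from the first-step dominoes, so every word starts with $\$q_0$ and ends with the appropriate $\Sigma_3$ symbol. The ending units $E$ are the dominoes $(\lambda,\$q_0a)$, adapted as in Step 1. This forces the lower strand to be completed last at its left end, matching the first part of the filter.

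\textbf{Step 3: correctness in both directions.} First, for an accepting computation $c_0\Rightarrow\dots\Rightarrow c_n$, we exhibit a derivation producing the word displayed before the domino list; hence $L(S)\neq\emptyset$ whenever the TM accepts some word. Conversely, take any successful derivation. The context letters force the pair of strands to respect the filter at every step. The argument of the previous section then shows that the strands encode a halting computation from some $q_0w$.

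The main obstacle is Step 1. In a \5 SAS the lower strand's growth in one unit is tied to the upper strand's growth in the same unit, with only single-letter context. This is precisely the odPCP behaviour, but I must ensure that splitting the empty-sided dominoes does not allow spurious interleavings. I would first try to fuse each $\lambda$-component domino with a genuine neighbour (e.g.\ $(a''',\lambda)$ combined with $(a'',a''')$-type checks). If that fails, I would introduce fresh marked letters to carry state between consecutive units.
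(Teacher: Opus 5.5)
Your plan is essentially the paper's proof: the odPCP dominoes of Subsection~\ref{sec:subUndecPCP} become units by attaching one context letter to each side (e.g.\ $(a'b',b''a'')$, $(\$ q'a',b''p''@)$), and the first-step dominoes become the axioms. As in the paper, $(\lambda,\$q_0a)$ becomes the ending unit, and the regular filter is enforced only through the joint upper/lower contexts together with multi-letter unit bodies such as $(p'''\#'''\$\$@,\$)$. The empty-sided dominoes are handled exactly by your second option, a bare context letter on the empty side, as in $(a'''b''',\$)$ or $(q_0'',a\$)$. Note that the correct shape is $(ax,b)$ for a domino $(x,\lambda)$ and $(a,yb)$ for $(\lambda,y)$; the $(a,b)$ you wrote is the trivial no-op.
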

\begin{proof}
Let a TM be given as $(Q,V,\Sigma,q_0,\hash,\delta,Q_f)$. Based on that, we construct a \5 SAS $S=(\Sigma',A,T,E)$ as follows.

The alphabet $\Sigma'$ is defined already %
 in Subsection \ref{sec:subUndecPCP}: with $\Sigma_1 = \Sigma \cup Q$; 
$\Sigma_2 = \{a'~|~a\in \Sigma_1\}$, %
$\Sigma_3 = \{a''~|~a\in \Sigma_1\}$, %
and
 $\Sigma_4 = \{a'''~|~a\in \Sigma_1\}$: %
 $\Sigma' = \Sigma_1 \cup \Sigma_2 \cup \Sigma_3 \cup \Sigma_4
\cup \{ \$, @ \}$ with separator symbols $\$ , @$. %

The starting assembly units, the axioms are defined as follows:

\begin{itemize}
\item $(\$q_0a,b''q'')$ if $(q,b,stay) = \delta(q_0,a)$ with $a\in V$, $b\in \Sigma$, $q_0\in Q\setminus Q_f$, $q\in Q$, where $b'',q''\in \Sigma_3$ are representing $b$ and $q$;
\item $(\$q_0a,q''b'')$ if $(q,b,right) = \delta(q_0,a)$ with $a\in V$, $b\in \Sigma$, $q_0\in Q\setminus Q_f$, $q\in Q$, $b'',q''\in \Sigma_3$ are representing $b$ and $q$;
\item $(\$q_0a,b''\#'' q'')$ if $(q,b,left) = \delta(q_0,a)$ with $a\in V$, $b\in \Sigma$, $q_0\in Q\setminus Q_f$, $q\in Q$, $b'',q'',\#'' \in \Sigma_3$ are representing $b$, $q$ and $\#$.
\end{itemize}

Then, $T$ has the following pairs to simulate the computations and build the upper and lower strings. %

\begin{itemize}
\item $(ab,b''a'')$ for all $a,b\in V$ where $a''$ and $b''$ in $\Sigma_3$ represent the letters corresponding to $a$ and $b$,
\item $(ab,b''q'')$ for all $a,b\in V$, $q\in Q$ where $b''$ and $q''$ in $\Sigma_3$ represent the letter corresponding to $b$ and the state corresponding to $q$.
\item $(a\$,@b'')$ for all $a \in V$ and $b''\in \Sigma_3$ to put the markers, after the first pair of configurations assembled in the upper and lower strand, respectively.
\end{itemize}

Then, for assembling strings representing the other pairs of configurations  (where the primed and double primed versions are used as %
in the previous proof):

\begin{itemize}
\item $(\$ a',a'' @)$ for each $a\in\Sigma$ (using their representations in $\Sigma_2$ and $\Sigma_3$, respectively); and
\item $(a'b',b''a'')$ for each $a,b\in\Sigma$ (using their representations in $\Sigma_2$ and $\Sigma_3$, respectively) to write and copy the part of the tape that is before and after the head;
\end{itemize} %
For the part where %
 the configuration is changed:

\begin{itemize}
\item $(\$ q'a', b''p'' @) $, if $(p,b,stay) = \delta(q,a)$ with $a,b\in \Sigma$, $q\in Q\setminus Q_f$, $p\in Q$; %
\item $(\$ q'a'd' , d''p''b'' @) $, for all $d\in \Sigma$ if $(p,b,right) = \delta(q,a)$ with $a,b\in \Sigma$, $q\in Q\setminus Q_f$, $p\in Q$; %
\item $(\$ q'a'\$ , @\#''p''b'' @) $, if $(p,b,right) = \delta(q,a)$ with $a,b\in \Sigma$, $q\in Q\setminus Q_f$, $p\in Q$; %
(for the case when there was a right move from the last used tape cell);
\item $(\$ c'q'a' ,b''c''p'' @) $, for all $c\in\Sigma$ if $(p,b,left) = \delta(q,a)$ with $a,b\in \Sigma$, $b\ne \hash$, $q\in Q\setminus Q_f$, $p\in Q$; %
\item $(\$ c'q'a' ,c''p''@ ) $, for all $c\in\Sigma$ if $(p,\#,left) = \delta(q,a)$ with $a\in \Sigma$,  $q\in Q\setminus Q_f$, $p\in Q$; %
\item $(\$ q'a' ,b''\#''p'' @) $,  if $(p,b,left) = \delta(q,a)$ with $a,b\in \Sigma$, $b\ne \hash$, $q\in Q\setminus Q_f$, $p\in Q$; %
\item $(\$ q'a' ,\#''p'' @) $,  if $(p,\#,left) = \delta(q,a)$ with $a\in \Sigma$,  $q\in Q\setminus Q_f$, $p\in Q$; %
\item $(c' q'a', b''p'' c'') $, for all $c\in \Sigma$, if $(p,b,stay) = \delta(q,a)$ with $a,b\in \Sigma$, $q\in Q\setminus Q_f$, $p\in Q$; 
\item $(c' q'a'd' , d''p''b'' c'') $, for all $c, d\in \Sigma$, if $(p,b,right) = \delta(q,a)$ with $a,b\in \Sigma$, $q\in Q\setminus Q_f$, $p\in Q$; %
\item $(c' q'a'\$ , @\#''p''b'' c'') $, for all $c\in \Sigma$, if $(p,b,right) = \delta(q,a)$ with $a,b\in \Sigma$, $q\in Q\setminus Q_f$, $p\in Q$; %
(for a right move from the last used tape cell);
\item $(d' c'q'a' ,b''c''p'' d'') $, for all $c,d\in \Sigma$,  if $(p,b,left) = \delta(q,a)$ with $a,b\in \Sigma$, $b\ne \hash$, $q\in Q\setminus Q_f$, $p\in Q$; %
\item $(d' c'q'a' ,c''p''d'' ) $, for all $c, d\in\Sigma$ if $(p,\#,left) = \delta(q,a)$ with $a\in \Sigma$,  $q\in Q\setminus Q_f$, $p\in Q$; %
\item $(d' q'a' ,b''\#''p'' d'') $, for all $d\in \Sigma$,  if $(p,b,left) = \delta(q,a)$ with $a,b\in \Sigma$, $b\ne \hash$, $q\in Q\setminus Q_f$, $p\in Q$; %
\item $(d' q'a' ,\#''p'' d'') $, for all $d\in \Sigma$,  if $(p,\#,left) = \delta(q,a)$ with $a\in \Sigma$,  $q\in Q\setminus Q_f$, $p\in Q$. %
\end{itemize}

In this way,  in the upper strand a sequence $\$ c_0\$ c'_1\$ \dots\$ c'_{m-1} $ can be assembled while in the lower strand the string $ (c''_m)^R @ \dots @ (c''_1)^R$. No common characters are used, thus the two strands cannot overlap yet.

To construct the first appearance of an accepting configuration on both strands, let $T$ contain %
\begin{itemize}
\item $(\$ q'a'\$ c''',\$ \$ @\#''p''b'' @)$ for all $c\in \Sigma$, if $(p,b,right) = \delta(q,a)$ with $a,b\in \Sigma$, $q\in Q\setminus Q_f$, $p\in Q_f$; %
\item $(c'q'a'\$ d''',\$ \$ @\#''p''b''c'') $ for all $c, d\in \Sigma$, if $(p,b,right) = \delta(q,a)$ with $a,b\in \Sigma$, $q\in Q\setminus Q_f$, $p\in Q_f$; %
\item $(a'''b''',\$)$ for all $a,b\in \Sigma$; %
\item $(a'''p''',\$)$ for all $a\in \Sigma$, $p\in Q_f$; and
\item $(p''' \#''' \$ \$ @, \$)$  for each $p\in Q_f$.
\end{itemize}

This could be the first step when the two strands overlap. From here, we need assembly units to complete both strands, and check if the corresponding configurations coded already in the two strands are the same.

\begin{itemize}
\item $(@p'',p'''\$ )$ for each $p\in Q_f$ with its double and triple primed versions;
\item $(a''b'',b'''a'''\$ )$ for each $a,b\in \Sigma_1$ with their double and triple primed versions;
\item $(a''@,\$ a'''\$ )$ for each $a\in \Sigma_1$ with its double and triple primed version;
\item $(@a'',a'\$ )$ for each $a \in \Sigma_1$ with its primed and double primed versions;
\item $(a''b'',b'a'\$ )$ for each $a,b\in \Sigma_1$ with their primed and double primed versions;
\item $(a''@,\$ a'''\$ )$ for each $a\in \Sigma_1$ with its double and triple primed version.
\end{itemize}

The full upper strand can be made with these units, checking the identity  of the configurations written in double primed and triple primed or primed way, respectively. %

As the last part, $T$ must also have the following units:

\begin{itemize}
\item $(q_0'',\$ a')$ for each $a'\in \Sigma_2$;
\item $(q_0'', a \$ )$ for each $a\in V$;
\item $(q_0'', a b )$ for each $a,b \in V$.
\end{itemize}

Finally, $E$ contains the ending assembly units
\begin{itemize}
\item $(q_0'', \$q_0 a )$ for each $a\in V$, with the initial state $q_0$.
\end{itemize}

Now, one can see based on the overlapping symbols that the obtained \5 SAS generates only double strings of the following regular form:
$$ \$q_0 \Sigma^+ \$ (\Sigma^+_2 \$)^* \Sigma^+_4 \$ \$ (@ \Sigma^+_3 )^+ ,$$
and thus, it `simulates' exactly the accepting computations of the given Turing machine.

Since it is undecidable whether a TM has any accepting computations (Theorem \ref{thm:T}), %
 the emptiness problem of \5 SAS is also undecidable.
\end{proof}

Based on the proofs we had so far, we can also establish a new result characterizing RE, the class of recursively enumerable languages.

\begin{theorem}
Every recursively enumerable language $L$ can be obtained by an erasing morphism from a language of $\3$.
\end{theorem}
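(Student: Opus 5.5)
Every recursively enumerable language $L\subseteq V^+$ is accepted by some Turing machine $M=(Q,V,\Sigma,q_0,\hash,\delta,Q_f)$. We may assume, as in the preliminaries, that $M$ halts only after a right move onto a $\hash$ at the right end of the tape content, and we may take $Q_f$ to consist of accepting states only. The plan is to reuse the \5 SAS $S$ built from $M$ in the proof of the previous theorem (emptiness of \5 SAS). We then show that a single erasing morphism $h$ extracts the input word from every generated word. The morphism is $h:\Sigma'^*\to V^*$ with $h(a)=a$ for $a\in V$ and $h(x)=\lambda$ for every other symbol of $\Sigma'$, i.e.\ for $\$$, $@$, $q_0$, the symbols of $\Sigma\setminus V$, and all of $\Sigma_2\cup\Sigma_3\cup\Sigma_4$.

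The key steps are as follows. First, from the structure established in the previous proof, every word of $L(S)$ has the form $\$q_0 w \$ c_1'\$\cdots\$ c'''_n\$\$@(c''_n)^R@\cdots@(c''_1)^R$. Here $c_0=q_0w\Rightarrow c_1\Rightarrow\dots\Rightarrow c_n$ is a halting computation of $M$ and $w\in V^+$, so the only symbols from $V$ (unprimed) occurring in the word are those of $w$ in the leading block $\$q_0w\$$. The primed copies in the later configurations are distinct letters and are erased. Hence $h$ maps such a word to exactly $w$, and $h(L(S))\subseteq L(M)=L$. Second, for the converse inclusion, given $w\in L$ we take the accepting computation of $M$ on $w$ and exhibit the corresponding successful derivation of $S$. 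It starts with the axiom determined by the first letter of $w$ and the first transition, continues with the copy units $(ab,b''a'')$ and the units producing the configurations, then the halting units, the checking units, and finally the units $(q_0'',\dots)$ ending with $(q_0'',\$q_0a)\in E$. This gives a word $z\in L(S)$ with $h(z)=w$, so $L=h(L(S))$ as $\lambda$-free languages.

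The main obstacle is the first step: one needs that \emph{every} successful derivation of $S$ really yields the regular shape and encodes a genuine computation on $w$. Only then do no spurious unprimed $V$-letters survive, and only then is $w$ guaranteed to be accepted. The previous proof asserts this only informally (``one can see based on the overlapping symbols''). I would make it precise by an invariant on derivations. Before the overlap, the upper strand lies in $\$q_0V^+\$(\Sigma_2^+\$)^*$ and the lower strand in $(@\Sigma_3^+)^+$, with consecutive encoded configurations related by $\Rightarrow$. After the halting unit, the checking units force letter-by-letter equality of the $\Sigma_2$/$\Sigma_4$ copies with the reversed $\Sigma_3$ copies, and the units $(q_0'',\cdot)$ force the lower prefix to equal $\$q_0w\$$. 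If any gap in that construction appears, I would patch it by adding disjoint markers, which does not affect $h$ since all auxiliary symbols are erased anyway.
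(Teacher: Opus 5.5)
Your proposal is correct and takes the route the paper intends: the paper gives no separate argument beyond ``based on the proofs we had so far'', i.e., it takes the $5'\to 3'$ SAS built from a Turing machine for $L$ in the emptiness proof and erases every symbol except the unprimed input letters in the leading block $\$q_0w\$$. Your extra care is well placed (taking $Q_f$ to be accepting states only, and requiring an invariant for the shape of every generated word rather than asserting it), because the correctness of the inherited construction is the real content of both your proof and the paper's; for instance, the final units $(q_0'',\cdot)$ as written can only be applied when the upper strand ends in $q_0''$.
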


\section{The unary case}

So far, we have shown some undecidability results, on the other hand, the unary case is different: %

\begin{proposition}\label{prop:unaryPCP}
For the unary alphabet the reverse PCP is decidable.
\end{proposition}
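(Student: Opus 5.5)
Over a unary alphabet $\{1\}$, every word equals its own reversal, so the condition $u(w)=(\ell(w))^R$ collapses to $u(w)=\ell(w)$, and in fact only to the equality of lengths $|u(w)|=|\ell(w)|$. Write each domino as $d_i=(1^{a_i},1^{b_i})$ with $a_i,b_i\geq 0$ and set $\delta_i=a_i-b_i\in\mathbb{Z}$. For a sequence $w\in D^+$ in which $d_i$ occurs $k_i$ times, we have $|u(w)|-|\ell(w)|=\sum_i k_i\delta_i$. The order of the dominoes is therefore irrelevant. The reverse PCP instance has a solution if and only if there are nonnegative integers $k_1,\dots,k_n$, not all zero, with $\sum_i k_i\delta_i=0$.

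This condition can be decided directly from the finite list $\delta_1,\dots,\delta_n$. If some $\delta_i=0$, the single domino $d_i$ is a solution. If some $\delta_i>0$ and some $\delta_j<0$, take $k_i=-\delta_j$ and $k_j=\delta_i$ and all other counts zero. This is exactly the balancing trick used in case 2 of the proof of Proposition~\ref{prop0:notIn}. Otherwise all $\delta_i$ are nonzero and of the same sign. Then any nonzero choice of nonnegative counts gives a nonzero sum, so no solution exists.

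The algorithm therefore computes all $\delta_i$ and answers ``yes'' exactly when some $\delta_i=0$ or two of them have opposite signs. This takes linear time in the size of the instance.

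The only point requiring care is the first reduction: that a unary reverse PCP solution is equivalent to a length equality, and that the solution depends only on the multiset of dominoes used. Both follow immediately from the definitions of $u$ and $\ell$ together with $v^R=v$ for every $v\in\{1\}^*$. The same argument gives the same conclusion for the unary odPCP, by Theorem~\ref{thm: rev-od}, and for the classical PCP.
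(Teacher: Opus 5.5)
Your proof is correct and follows the same route as the paper's. Both reduce the problem to comparing the lengths of the upper and lower words, and then split into three cases: a domino with equal lengths (an immediate solution), all length differences of one strict sign (no solution), and differences of both signs, where $d_i$ is used $-\delta_j$ times against $d_j$ used $\delta_i$ times, matching the paper's multiplicities $(j-k)$ and $(n-m)$. You also make explicit two points the paper leaves implicit: that reversal is trivial over $\{1\}$, and that only the multiset of dominoes matters.
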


\begin{proof} Let us consider the unary reverse PCP with a set of dominoes $D =\{ (\alpha_1, \beta_1), (\alpha_2, \beta_2),\dots, (\alpha_n, \beta_n) \}$ over the alphabet $\{1\}$. There are the following cases:

\begin{enumerate}
\item Let us assume that there is a domino  $d_i=(1^n,1^n)$, i.e., $\alpha_i = \beta_i$.
Then, taking this domino once (or any positive integer times) results in a solution of the problem. Thus, the problem is solvable. \\ \\ Further, let us assume that no domino with this property exist in $D$.
\item In the case when, for all domino $d_i \in D$, $| \alpha_i | > |\beta_i|$, i.e., the upper string of each domino is longer than its lower string, 
 it is easy to see that for any positive number of applications of any dominoes (or their combinations) the result has a longer upper string than a lower string, and thus no solution exists.
\item In the opposite case, i.e., if for each domino $d_i$,  $| \alpha_i | < |\beta_i|$, a similar argument by interchanging the roles of the upper and lower strings show that there is no solution.
\item In the remaining case, there are both types of dominoes, i.e., there is a domino $d_i \in D$ such that $| \alpha_i | > |\beta_i|$ and also there is a domino $d_\ell \in D$ such that $| \alpha_\ell | < |\beta_\ell|$.
Let us use the following notation: let  $n = | \alpha_i | $, $m = |\beta_i|$,  $k = | \alpha_\ell | $ and $j = |\beta_\ell|$. Then
by applying domino $d_i$ for $(j-k)$ times and domino $d_\ell$ for $(n-m)$ times will lead us to a solution. The problem is solvable.
\end{enumerate}

As all possible cases are covered above, the decidability of the problem is shown.
 \end{proof}

Moreover, in fact, in the above proof, we have characterized the solvable and the unsolvable unary reverse Post Corresponding Problems.

\begin{proposition}
\label{prop:unaryDEC}
For unary \5 SAS, the emptiness problem is decidable.
\end{proposition}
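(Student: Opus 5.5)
The plan is to reduce emptiness of a unary $5'\to 3'$ SAS $S=(\{1\},A,T,E)$ to finitely many linear Diophantine questions, in the spirit of the case analysis in the proof of Proposition~\ref{prop:unaryPCP}. The first observation is that over a unary alphabet a derivation step never blocks. A unit $(1^p,1^q)\in T\cup E$ with $p,q\geq 1$ can always be attached to $(1^u,1^v)$: the last letter of the upper strand and the first letter of the lower strand are both $1$. The result is $(1^{u+p-1},1^{v+q-1})$. So the pair of strand lengths evolves additively and independently of the order of the units. A unit contributes only its \emph{difference} $d=(p-1)-(q-1)=p-q$ to the quantity (upper length) $-$ (lower length).

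Next I fix an axiom $(1^s,1^t)\in A$ and an ending unit $(1^e,1^f)\in E$, and let $T=\{(1^{p_1},1^{q_1}),\dots,(1^{p_r},1^{q_r})\}$ with $d_i=p_i-q_i$. A successful derivation using unit $i$ exactly $c_i\geq 0$ times ends in $(w,w)$ for some $w\in 1^+$ if and only if
$$ D_0+\sum_{i=1}^r c_i d_i = 0, \qquad D_0=(s-t)+(e-f). $$
I need to check both directions: any choice of counts is realizable, since order is irrelevant and units never block; and the lengths are automatically positive. Hence $L(S)\neq\emptyset$ iff for some of the finitely many pairs (axiom, ending unit) the integer $-D_0$ lies in the additive monoid $M$ generated by $d_1,\dots,d_r$. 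I will also note that using an element of $E\cap T$ in the middle is already covered, because it then simply counts as a $T$-unit.

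It remains to decide membership $-D_0\in M$, which is the only step with real content. If $D_0=0$, the answer is yes with all $c_i=0$. Otherwise I split into cases as in Proposition~\ref{prop:unaryPCP}.

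\begin{itemize}
\item If all $d_i$ have the same sign as $D_0$, or are zero, the answer is no.
\item If the nonzero $d_i$ of the correct sign all have that single sign and no opposite-sign $d_i$ exists, then every coefficient satisfies $c_i\leq |D_0|$. A bounded search then decides membership.
\item If both positive and negative $d_i$ occur, I claim $M=g\mathbb Z$ with $g=\gcd(d_1,\dots,d_r)$. By B\'ezout, $g=\sum z_i d_i$ with integers $z_i$. Adding a sufficiently large multiple of the zero-sum combination $|d_\ell|\cdot d_i+|d_i|\cdot d_\ell$ (where $d_i>0>d_\ell$) makes all coefficients nonnegative, so $\pm g\in M$. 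The test is then $g\mid D_0$.
\end{itemize}

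I expect this last case to be the main obstacle. Making the coefficients nonnegative requires shifting every coefficient simultaneously, using one positive and one negative generator together with the fact that $\sum_k |d|\,d_k$-style zero combinations exist for every generator. Once that is done, all cases are effective, and the emptiness problem is decidable.
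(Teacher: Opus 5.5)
Your proof is correct and follows the same route as the paper. Both arguments reduce emptiness to the length differences of each (axiom, ending unit) pair and of the units of $T$, then split by sign: a bounded multiset search when all nonzero differences of $T$ have the sign needed to cancel the pair's difference, and a divisibility test by $g=\gcd$ of the differences when both signs occur. The only real difference is in the mixed-sign case. You show that the generated monoid is all of $g\mathbb{Z}$ via B\'ezout plus shifting by zero-sum combinations, whereas the paper argues through Frobenius-type representability and the Diophantine equations $xf-ye=\pm 1$. Your justification is cleaner, and both give the same criterion.
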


\begin{proof} Let us consider a unary \5 SAS $S=(\{1\}, A, T, E)$.

As both $A$ and $E$ are finite, there is a finite amount of pairs of their elements, and it is easy to check if for any pair of an axiom and an ending assembly unit results in a generated word, i.e., the length of the upper and lower strings are the same when applying them (without any additional assembling units). As in this case, the decidability has already been proven, in the rest of the proof we assume that $S$ has no such pair of axiom and ending assembly unit.

Then, in the negative case, when no such pair produce directly a derived word, we can obtain a finite set of integers, where for each pair, the difference of the lengths of its upper and lower strings are measured: a positive integer shows how much the upper string is longer than the lower string and (the absolute value of) a negative integer shows how much the lower string is longer than the upper string for the given pair. Let this set be $B= \{-a_m, -a_{m-1}, \dots, -a_1 , b_1, \dots, b_n \}$ where the values $-a_i$ are the negative and the values $b_i$ are the positive values (for some nonnegative integers $m,n$).

Now let us consider the pairs in $T$ and define the set $C= \{  -e_j, 
\dots, -e_1 , f_1, \dots,$ $ f_\ell  \}$ similarly (for nonnegative integers $j$ and $\ell$) by the negative and positive values of the length differences of the upper and lower strings of the elements of $T$. Observe that we do not care about assembly units in $T$ with equal length upper and lower strings.
In case, all elements of $T$ have this special property (i.e. each element contains identical strings), but there is no word that can be assembled by only using an axiom and ending unit pair, the given \5 SAS generates the empty language.

Now, we consider the remaining case: when both $B$ and $C$ are nonempty sets.

Clearly, if both $B$ and $C$ contains either only positive numbers or only negative numbers, the given \5 SAS $S$ generates the empty language.

Now, let us consider the case when $C$ contains only positive numbers, but $B$ has also negative value(s).
In this case $ j=0, \ell >0$ and $ m>0 $.

In this case there are finitely many possibilities to check:
first we choose a negative element $a_i$ of $B$ and, then we should check if there is a finite multiset $C'$ from the (positive) values of $C$ such that their sum  is 
 $|a_i|$. If we can find such $a_i$ and multiset $C'$ with the above property, then the language generated by $S$ is nonempty, otherwise it is empty.
(One needs to check only finitely many possibilities, as only multisets containing at most $|a_i|$ elements should be checked for each $a_i$.)

The case when $C$ contains only negative numbers, but $B$ has also positive value(s) is similar.
In this case $ j>0, \ell =0$ and $ n>0 $.

In the remaining case, $C$ has both positive and negative numbers. The question is if there is a multiset of them such that their sum is the opposite to an element of $B$.

If the greatest common divisor of the positive elements of $C$ is $1$, and also the greatest common divisor of the absolute values of the negative elements is $1$, then (by a theorem connected to Frobenius) 
there is a positive number $k$ such that

\begin{itemize}
\item for each integer $ i_1$ larger than $k$, there is a multiset of the positive elements of $C$ such that their sum is $i_1$, and
\item for each integer $ i_2$ larger than $k$, there is a multiset of the negative elements of $C$ such that the sum of their absolute value is $i_2$.
\end{itemize}

Based on that we can pick an element of $B$ (with its axiom and ending units) and multisets of elements of $T$ represented by the positive and negative elements of $C$ such that, they provide a generated word and the language is not empty.

Similar arguments work already if at least one of the greatest common divisors of the positive elements of $C$ and of the absolute values of the negative elements of $C$ is $1$, since then, there is a positive $k$ such that (depending on the case) each number that is greater than $k$ can be obtained as a sum of the elements of a multiset of positive elements of $C$, or each number that is less than $-k$ can be obtained as a sum of the elements of a multiset of negative elements of $C$, and thus clearly, it is enough to put enough many elements of the other multiset and choosing an element of $B$ such that the assigned elements would lead to a solution.

Even if the greatest common divisor of these two greatest common divisors 
 is $1$, we can manage based on the following fact. %
Let the greatest common divisor of the positive elements of $C$ be $f >1$ and the   greatest common divisor of the absolute values of the negative elements of $C$ be $e > 1$.
In fact, one can reduce the problem to the well-known Diophantine equations  $xf - ye = \pm 1$. Each of these equations has infinitely many integer solution pairs of $x$ and $y$, and also such solutions were  both $x$ and $y$ are high enough. When $B$ does not contain any of $+1$ and $-1$, but $j >0$ or $-j<0$, we may need to take $|j|$ times the elements of the multisets of the elements of $C$ representing a given solution.

The last case is when the   greatest common divisor of the positive elements of $C$ is $f$ and the   greatest common divisor of the absolute values of the negative elements of $C$ is $e$ such that $e$ and $f$ are both multiplier of the greatest common divisor $z>1$ of $e$ and $f$.

In this case, if set $B$ contains a multiplier (negative or positive) of $z$, the generated language is not empty, otherwise it is empty. Since by the elements of $C$ only multipliers of $z$ can be obtained. However, the ideas of the previous cases work instead of using any large enough integers by using any large enough multipliers of $z$.

The decision made in each case, thus the proof is finished.
 \end{proof}

\section{Conclusions and Future Work}

\5 SAS is a new model for string assembly \cite{53SAS-hier}.
A new variant of the PCP is defined and used as an analogue to show undecidability result also for \5 SAS. Over the unary alphabet, these questions are decidable. The border of decidability is an open question. We plan to continue to characterize the \5 SAS languages and their properties and also to investigate other variants of SAS with the \5 feature based on \cite{SAS2}. %
Another open question is whether the regular filter can be removed from the undecidability proof of the new PCP variants.

 \bibliographystyle{eptcs} \bibliography{References53SAS}

\end{document}